\numberwithin{equation}{section}
\newcounter{thmcounter}
\newcounter{Remarkcounter}
\newcounter{ExampleCounter}
\numberwithin{thmcounter}{section}
\newtheorem{Prop}[thmcounter]{Proposition}
\newtheorem{Corol}[thmcounter]{Corollary}
\newtheorem{theorem}[thmcounter]{Theorem}
\newtheorem{Lemma}[thmcounter]{Lemma}
\theoremstyle{remark}
\newtheorem{Remark}[Remarkcounter]{Remark}
\newtheorem{Example}[ExampleCounter]{Example}
\begin{document}

\title{On the marginally trapped surfaces in 4-dimensional space-times with finite type Gauss map}
\author{Nurettin Cenk Turgay\footnote{The final version of this paper was published in General Relativity and Gravitation, see \cite{NCTGenRelGrav}} \footnote{ e-mail:turgayn@itu.edu.tr, Adress:Istanbul Technical University, Faculty of Science and Letters,
Department of  Mathematics, 34469 Maslak, Istanbul, Turkey}}

\date{\today}
\maketitle
\begin{abstract}
In this paper, we work on the marginally trapped surfaces in the 4-dimensional Minkowski, de Sitter and anti-de Sitter space-times. We obtain the complete classification of the marginally trapped surfaces in the Minkowski space-time with pointwise 1-type Gauss map.   Further, we give construction of marginally trapped surfaces with 1-type Gauss map and a given boundary curve. We also state some explicit examples. We also prove that a marginally trapped surface in the de Sitter space-time $\mathbb S^4_1(1)$ or anti-de Sitter space-time $\mathbb H^4_1(-1)$  has pointwise 1-type Gauss map if and only if its mean curvature vector is parallel. Moreover, we obtain that there exists no marginally trapped surface in $\mathbb S^4_1(1)$ or $\mathbb H^4_1(-1)$ with harmonic Gauss map.

\textbf{Keywords}: Minkowski space-time, marginally trapped surface,  finite type Gauss map, null 2-type, de Sitter space-time

\textbf{Mathematics Subject Classification}: 53B25, 53C40
\end{abstract}

\section{Introductions}

Let $M$ be an $n$-dimensional semi-Riemannian submanifold of a semi-Euclidean space and $\phi$ a mapping defined on $M$ into another semi-Euclidean space. If $\phi$ is one of geometrically important mappings on $M$ such as its Gauss map or its position vector, then understanding spectrum of $\phi$ with respect to the Laplace operator $\Delta$ of $M$ may play great role to understand geometry of $M$. In this direction, the notion of finite type mappings was introduced by B. Y. Chen in late 1970's.  By the definition, if the mapping $\phi$ can be expressed as a sum of eigenvectors corresponding from $k$ distinct eigenvalues of  $\Delta$, then it is said to be  of $k$-type. If one of these eigenvalues is zero, then $M$ is called  null $k$-type, \cite{ChenKitap,ChenMakale1986}. Many important results about finite type mappings defined on semi-Riemannian submanifolds have appeared sofar \cite{Bleecker-Weiner,ChenRapor,Chen-Morvan-Nore,Chen-Piccinni,ReillyHelvaci}.

In particular, the Gauss map of submanifolds has been worked in several articles in this direction after  some results on the submanifolds with 1-type Gauss map or 2-type Gauss map had been given in \cite{Chen-Piccinni}. The Gauss map  $\nu$ of the submanifold $M$ is 1-type if and only if it satisfies 
\begin{equation}\label{Glbl1TypeDefinition}
 \Delta \nu  = \lambda (\nu+C)
\end{equation}
for some $\lambda \in \mathbb
R$ and  some constant vector $C$. However, it is well-known that the Laplacian of the Gauss map of some surfaces such as helicoids  and catenoids in the 3-dimensional Minkowski space-time and generalized torus in the 4-dimensional Euclidean space takes the form of
\begin{equation}\label{PW1TypeDefinition}
 \Delta \nu =f(\nu +C)
\end{equation}
for some smooth function $f$ on $M$ and some constant vector $C$. A submanifold with the Gauss map satisfying \eqref{PW1TypeDefinition} is said to have pointwise 1-type Gauss map (cf. \cite{KKKM,UDur,UDur2,Yoon-2}).

On the other hand,  marginally trapped surfaces play an important role in the concept of trapped surfaces, introduced by Penrose in \cite{Penrose1965}. Let $M$ be a space-like surface in a 4-dimensional space-time and $H$ its mean curvature vector. In the aspect of general relativity, $H$ measures the tension of the surface coming from the surrounding space. $M$ is said to be  trapped if $H$ is time-like on $M$.  A trapped surface has physical property that  the outgoing light rays from it are convergent. It is believed that there is a surface, called marginally trapped (or quasi-minimal) surface, to which all the light rays are parallel, \cite{CabrerizoComplete,ChenVeken2009RelNull,HaesenOrtega,Senovilla2002}.

Geometrically, a space-like surface in a 4-dimensional space-time is called marginally trapped if its mean curvature vector is light-like. In the very recent past, many problems in geometry involving marginally trapped surfaces in a space-time have been dealed with (cf. \cite{CabrerizoComplete,ChenIshikawa1991Bih,ChenVeken2009RelNull,ChenVeken2009Houston,HaesenOrtega}). For example, Haesen and Ortega obtained the classification of marginally trapped surfaces which are invariant under a group of boost isometries in \cite{HaesenOrtega}. Recently,  in \cite{CabrerizoComplete}, isotropic marginally trapped surfaces were classified.  

In this article, we study on the Gauss map of marginally trapped surfaces in the Minkowski, de Sitter and anti-de Sitter space-times. In Section 2, after we describe our notations, which are along the lines of notations used in \cite{ChenVeken2009RelNull}, we give a summary of the basic facts and formulas that we will use. In Section 3, we obtain some lemmas that we use in the other sections. In Section 4, we study on marginally trapped surfaces in the Minkowski space-time with pointwise 1-type Gauss map and give complete classification of such surfaces. Further, we give some methods for construction of marginally trapped surfaces with 1-type Gauss map. We also give some explicit examples. In Section 5, we focus on marginally trapped surfaces in the de Sitter and anti-de Sitter space-times in terms of their Gauss map. We proved that if the Gauss map of a marginally trapped surface in these space-times satisfies \eqref{PW1TypeDefinition}, then $C$ must be zero. We also proved that there is no marginally trapped surface in these space-times with harmonic Gauss map.

The surfaces we are dealing with are smooth and connected unless otherwise stated.
\section{Preliminaries}\label{SectionMinkPreliminaries}

\subsection{Basic notation, formulas and definitions}\label{SubSectBasicNot}
Let $\mathbb E^m_s$ denote the pseudo-Euclidean $m$-space with the canonical 
pseudo-Euclidean metric tensor $g$ of index $s$ given by  
$$
 g=-\sum\limits_{i=1}^s dx_i^2+\sum\limits_{j=s+1}^m dx_j^2,
$$
where $(x_1, x_2, \hdots, x_m)$  is a rectangular coordinate system in $\mathbb E^m_s$.

A non-zero vector $\zeta\in T_p(\mathbb E^m_s)\cong \mathbb E^m_s$  is called space-like (resp. time-like or light-like) if $\langle \zeta,\zeta\rangle>0$  (resp. $\langle \zeta,\zeta\rangle<0$ or $\langle \zeta,\zeta\rangle=0$), where $T_p(\mathbb E^m_s)$ denotes the tangent space of $\mathbb E^m_s$ at $p$.  We put 
\begin{eqnarray} 
\mathbb S^{m-1}_s(r^2)&=&\{x\in\mathbb E^m_s: \langle x, x \rangle=r^{-2}\},\notag
\\  
\mathbb H^{m-1}_{s-1}(-r^2)&=&\{x\in\mathbb E^m_s: \langle x, x \rangle=-r^{-2}\},\notag
\end{eqnarray}
where $\langle\ ,\ \rangle$ is the indefinite inner product of $\mathbb E^m_s$. In general relativity, $\mathbb E^4_1$, $\mathbb S^{4}_1 (r^2)$ and  $\mathbb H^{4}_1 (r^2)$
are known as the Minkowski, de Sitter and anti-de Sitter space-times, respectively, \cite{ChenVeken2009Houston}. These complete Lorentzian manifolds, which have constant sectional curvatures, are called Lorentzian space forms. We use the following notation
$$R^m_s(c)=\left\{\begin{array}{rcl} \mathbb S^{m}_s(c) &\mbox{if}& c>0\\ \mathbb E^{m}_s &\mbox{if}& c>0\\\mathbb H^{m}_s (c) &\mbox{if}& c<0 \end{array}\right.,\quad E(m,s,c)=\left\{\begin{array}{rcl} \mathbb E^{m+1}_s &\mbox{if}& c>0\\ \mathbb E^{m}_s &\mbox{if}& c>0\\\mathbb E^{m+1}_{s+1} (c) &\mbox{if}& c<0 \end{array}\right.$$
from which we see $R^m_s(c)\subset E(m,s,c)$. The light cone  $\mathcal{LC}^{m-1}$ with vertex at the origin in $\mathbb E^m_s$ is defined to be   
$
\mathcal{LC}^{m-1}  =  \{x\in\mathbb E^m_s: \langle x,x\rangle=0\}.
$

Let $M$ be an $n$-dimensional immersed semi-Riemannian submanifold  of Lorentzian space form $R^m_s(c)$. $M$ is said to be space-like if every non-zero tangent vector on $M$ is space-like. Denote Levi-Civita connections of $R^m_s(c)$ and $M$ by $\widetilde{\nabla}$ and $\nabla$,  respectively. 
In this section, we shall use letters $X,\; Y,\ Z$  to denote vectors fields tangent to $M$ and  $\xi,\; \eta$ to denote  vectors fields normal to $M$ and tangent to  $R^m_s(c)$. The Gauss and Weingarten formulas are given by
\begin{eqnarray}
\label{MEtomGauss} \widetilde\nabla_X Y&=& \nabla_X Y + h(X,Y),\\
\label{MEtomWeingarten} \widetilde\nabla_X \xi&=& -A_\xi X+D_X \xi,
\end{eqnarray}
 where $h$,  $D$  and  $A$ are second fundamental form, normal connection and  shape operator of $M$ in  $R^m_s(c)$, respectively. $A$ and $h$ of are related by
\begin{equation}
\label{MinkAhhRelatedby} \langle A_\xi X,Y\rangle=\langle h(X,Y),\xi\rangle.
\end{equation}
The mean curvature vector of $M$ in  $R^m_s(c)$  is defined by $H=\frac 1n \mathrm{tr}h$.

Denote by $R$ and $R^D$ the curvature tensor associated with the connections $\nabla$ and $D$, respectively. Then, the  Gauss, Codazzi and Ricci equations are given by
\begin{eqnarray}
\label{MinkGaussEquation} R(X,Y)Z&=&c\big( \langle Y,Z\rangle X- \langle X,Z \rangle Y\big)+ A_{h(Y,Z)}X- A_{h(X,Z)}Y,\\
\label{MinkCodazzi} (\bar \nabla_X h )(Y,Z)&=&(\bar \nabla_Y h )(X,Z),\\
\label{MinkRicciEquation} \langle R^D(X,Y)\xi,\eta\rangle&=&\langle[A_\xi,A_\eta]X,Y\rangle,
\end{eqnarray}
respectively, where  $\bar \nabla h$ is defined by
$$(\bar \nabla_X h)(Y,Z)=D_X h(Y,Z)-h(\nabla_X Y,Z)-h(Y,\nabla_X Z).$$

Let $M$ be a space-like surface in the space-time $R^4_1(c)$.   Consider  an orthonormal frame field $\{e_1,e_2;e_3,e_4\}$ on $M$. The connection forms $\omega_{AB}$ associated with this frame field are defined by $\omega_{AB}(X)=\langle\widetilde\nabla_X e_A,e_B\rangle$ and satisfy $\omega_{AB}+\omega_{BA}=0$ for $A,B=1,2,3,4$. The Gaussian curvature $K$ of  $M$ is defined by $K=\langle R(e_1,e_2)e_2,e_1\rangle$. A surface is said to be flat if $K\equiv 0$. We define normal curvature $K^D$ of $M$ in  $R^4_1(c)$ by $K^D=\langle R^D(e_1,e_2)e_3,e_4\rangle $. 

Let the mean curvature vector $H$ of  $M$ in  $R^4_1(c)$ satisfy $\langle H,H\rangle\equiv0$ on $M$. If $H$ is light-like on $M$, then $M$ is said to be a marginally trapped surface in $R^4_1(c)$. If $H$ vanishes at some points of $M$, then the surface is called partly marginally trapped, \cite{HaesenOrtega}.

Throughout this work, we denote the norm of $e_A$ by $\varepsilon_A$, i.e., $\varepsilon_A=\langle e_A,e_A\rangle=\pm 1$. In addition, the subscripts $u,v$  denotes the partial derivatives with respect to $u,v$.
\subsection{Gauss map}\label{SubSectMinkGaussMap}
Let  $\Lambda^{n}(\mathbb E^m_s)$ and $G(n, m)$ denote the space of  $n$-vectors on $\mathbb E^m_s$ and the Grassmannian manifold consisting of all $n$-planes through the origin of $\mathbb E^m_s$, respectively.  Note that $G(n, m)$  is canonically imbedded in $\Lambda^{n}(\mathbb E^m_s)$ which is an $N$ dimensional vector field, where $N= {m\choose {n}}$. A non-degenerate inner product on $\Lambda^{n}(\mathbb E^m_s)$ is defined by
$$\langle X_1\wedge X_2\wedge\cdots\wedge X_{n}, Y_1\wedge Y_2\wedge\cdots\wedge Y_{n}\rangle= \det(\langle X_i,Y_j\rangle),$$
where $X_i,\;Y_i\in\mathbb E^m_{s},\; i=1,2,\hdots,n$ and $\langle X_i,Y_j\rangle$ denotes the semi-Euclidean inner product of the vectors $X_i$ and $Y_j$. We will denote the inner product space $\Big(\Lambda^{n}(\mathbb E^m_s),\langle,\rangle\Big)$ by $\Lambda^{m,n}_S$, where $S$ is its index. There exists a one-to-one, onto and linear isometry from $\Lambda^{m,n}_S$ into $\mathbb E^N_S$, because their dimension and index are equal(see \cite[p. 52]{ONeillKitap}). Hence, we have $G(n, m)\subset\Lambda^{m,n}_S\cong\mathbb E^m_s$.  

Let $M$ be an $n$-dimensional, oriented space-like submanifold of the semi-Euclidean space  $\mathbb E^m_s$. Consider a local orthonormal base field $\{e_1, e_2,\hdots,e_{n}\}$ of the tangent bundle of $M$. Then, the Laplace operator of $M$ with respect to the induced metric is 
\begin{equation}\label{SemiEuclSpacSubmflDelta}
\Delta=\sum\limits^n_{i=1}(-e_ie_i+\nabla_{e_i}e_i).
\end{equation}
The smooth mapping
\begin{equation}\label{MinkGaussTasvTanim}
\begin{array}{rcl}\nu:M&\rightarrow&G(m-n, m)\subset S^{N-1}_S (\varepsilon)\subset \mathbb E^N_S\cong\Lambda^{m,n}_S\\
p&\mapsto&\nu(p)=(e_{1}\wedge e_{2}\wedge\hdots\wedge e_n)(p)\end{array}
\end{equation}
is called the (tangent) Gauss map of $M$  which assigns a point $p$ in $M$   to the representation of the oriented 
$n$-plane through  the origin of $\mathbb E^m_s$ and parallel  
to the tangent  space of $M$ at $p$, \cite{Chen-Piccinni,KKKM}.

A semi-Riemannian submanifold $M$ of a semi-Euclidean space $\mathbb E^m_s$ is said to have harmonic Gauss map if $\Delta\nu\equiv0$ on $M$. On the other hand, $M$ is said to have {\it pointwise 1-type Gauss map} if it satisfies \eqref{PW1TypeDefinition} for  a constant vector $C\in\mathbb E^N_S$ and a smooth fuction $f\not\equiv 0$.  Moreover, a pointwise 1-type Gauss map is called {\it of the first kind} if \eqref{PW1TypeDefinition} is satisfied for $C=0$, and  {\it of the second kind} if $C\neq 0$. Moreover, if \eqref{PW1TypeDefinition} is satisfied for a non-constant function $f$, then $M$ is said to have {\it proper} pointwise 1-type Gauss map. 

\subsection{Eigenvalues of Laplace equation in the plane}\label{subsectionHelmoltz}
 In this subsection, we use the same notation with \cite{EvansBook} and review briefly some known facts on the eigenvalues of Laplace operator. Let $\Omega$ be an open, bounded subset of $\mathbb R^2$ with the boundary $\beta=\partial\Omega$. Consider the elliptic, symmetric operator
\begin{equation}
\label{OpDeltaOmega} \begin{array}{rccc}\Delta:&H^1_0(\Omega)&\rightarrow &H^1_0(\Omega)\\
&\phi&\mapsto& -\phi_{uu}-\phi_{vv}.
\end{array}
\end{equation}
Note that a function $\phi$ is an eigenfunction of $\Delta$ given by \eqref{OpDeltaOmega} if and only if it is a solution of the boundary value problem
\begin{equation}\label{BVPHelmoltz}
\left\{\begin{array}{rrlc} \Delta\phi(u,v)+\lambda\phi(u,v)&=&0 & \mbox{if\ }(u,v)\in\Omega \\
\phi&=&0&\mbox{if\ }(u,v)\in\beta.
\end{array}\right.
\end{equation}

The following arguments hold (see \cite[p. 326-335]{EvansBook}). The eigenvalues of $\Delta$ are $0<\lambda_1\leq\lambda_2\leq\hdots\nearrow\infty$. Moreover, there exists an orthonormal basis $\{\phi_k| k=1,2\hdots\}$ of $L^2(\Omega)$ where $\phi_k\in H^1_0(\Omega)$ and satisfies
\eqref{BVPHelmoltz}
for $\lambda=\lambda_1,\lambda_2,\hdots$. The regularity theory shows that $\phi\in C^\infty(\Omega)$. Moreover, if $\beta$ is smooth, then $\phi\in C^\infty(\bar\Omega)$.


\section{Lemmas}
In this section, we obtain some lemmas that we will use in the other sections. 

Let $M$ be a space-like surface in $\mathbb E^m_s$ and $\nu$ its Gauss map. Consider an orthonormal frame field $\{e_1,e_2;e_3,e_4,\hdots,e_m\}$. From \cite[Lemma 3.2]{KKKM}, we obtain that $\nu$ satisfies 
\begin{equation}\label{Pre121Mink4GaussLaplMargTrapped}
\Delta\nu=\|\hat h\|^2\nu+\sum\limits_{3\leq\alpha\leq\beta\leq m}\varepsilon_\alpha\varepsilon_\beta\langle R^{\hat D}(e_1,e_2)e_\alpha,e_\beta\rangle e_\alpha\wedge e_\beta-2\hat D_{e_1}\hat H\wedge e_2-2e_1\wedge \hat D_{e_2}\hat H,
\end{equation}
where  $\hat D$, $\hat h$ and $\hat H$ denote normal connection, second fundemental form and mean curvature vector of $M$ in $\mathbb E^m_s$, respectively, $R^{\hat D}$ is the curvature tensor associated with $\hat D$ and $\|\hat h\|^2$ is the squared norm of $\hat h$. 

Now, consider a partly marginally trapped surface $M$ in $R^4_1(\delta)$ for $\delta\in\{1,0,-1\}$ and let $x$ be its position vector. From the equations $h(e_i,e_i)=\hat h(e_i,e_i)+\delta x$ and $h(e_i,e_j)=\hat h(e_i,e_j),\ i,j=1,2$, $i\neq j$ we obtain
\begin{equation}\label{Pre1Mink4GaussLaplMargTrapped}
\|\hat h\|^2=\langle  h(e_1,e_1), h(e_1,e_1)\rangle+2\langle  h(e_1,e_2),  h(e_1,e_2)\rangle+\langle  h(e_2,e_2),  h(e_2,e_2)\rangle+2\delta.
\end{equation}
By considering that $M$ is marginally trapped and using  \eqref{MinkGaussEquation} and \eqref{Pre1Mink4GaussLaplMargTrapped}, we obtain 
\begin{equation}\label{PreMink4GaussLaplMargTrapped}
\|\hat h\|^2=4\delta-2K.
\end{equation}
On the other hand, if $\delta=\pm1$, from $\hat D_{e_i} x=0$, we have
\begin{align}\label{DhatRD}
\begin{split}
 R^{\hat D}(e_1,e_2;\xi,x)=0, &\quad R^{\hat D}(e_1,e_2;\xi,\eta)=R^{D}(e_1,e_2;\xi,\eta),\\
\hat D_{e_i}\hat H=D_{e_i}H&
\end{split}
\end {align}
for all vector fields $\xi,\eta$ tangent to $R^4_1(\delta)$ and normal to $M$. 

We use \eqref{PreMink4GaussLaplMargTrapped} and \eqref{DhatRD} on \eqref{Pre121Mink4GaussLaplMargTrapped} and obtain the following lemma.
\begin{Lemma}\label{Mink4GaussLaplMargTrapped}
Let $M$ be an oriented, partly marginally trapped surface in the space-time $R^4_1(\delta)$ and $\{e_1,e_2;e_3,e_4\}$ an orthonormal frame field on $M$, where $\delta\in\{1,0,-1\}$. Then,  the Laplacian of the Gauss map  $\nu=e_1\wedge e_2$ is
\begin{align}\label{Mink4GaussLaplMargTrppd}
\begin{split}
\Delta\nu=&(4\delta-2K)\nu -2K^De_3\wedge e_4-2D_{e_1}H\wedge e_2-2e_1\wedge D_{e_2}H
\end{split}
\end{align}
where $K$ is the Gaussian curvature of $M$, $K^{D} $ and $H$  are  normal curvature and  mean curvature vector of $M$ in $R^4_1(\delta)$, respectively.
\end{Lemma}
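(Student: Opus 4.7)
The plan is to specialize the general identity \eqref{Pre121Mink4GaussLaplMargTrapped}, which expresses the Laplacian of the Gauss map of an arbitrary space-like surface in a flat ambient $\mathbb E^m_s$, to the setting of the lemma. I would view $R^4_1(\delta)$ as isometrically embedded in $\mathbb E^m_s$ according to the convention $R^m_s(c)\subset E(m,s,c)$ from Section~\ref{SectionMinkPreliminaries}: for $\delta=0$ one takes $m=4$ and the surface already lives in $\mathbb E^4_1$, while for $\delta=\pm 1$ one takes $m=5$ so that the position vector $x$ of $M$ appears as an additional normal direction to $M$ in $\mathbb E^5_s$. The hatted quantities in \eqref{Pre121Mink4GaussLaplMargTrapped} then refer to this flat ambient, whereas the unhatted quantities in the target formula refer to $R^4_1(\delta)$.

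I would then process the three summands on the right-hand side of \eqref{Pre121Mink4GaussLaplMargTrapped} separately. The scalar coefficient $\|\hat h\|^2$ of $\nu$ is handled directly by \eqref{PreMink4GaussLaplMargTrapped}, whose derivation already combines the marginally trapped hypothesis $\langle H,H\rangle=0$ with the Gauss equation \eqref{MinkGaussEquation} to yield the value $4\delta-2K$. For the normal-curvature sum, when $\delta=0$ only the single pair $(\alpha,\beta)=(3,4)$ contributes and $R^{\hat D}=R^D$ tautologically. When $\delta=\pm 1$, the extra index $5$ corresponds to the unit direction along the position vector $x$; the first identity of \eqref{DhatRD} kills every cross term involving $x$, while the second identity identifies $R^{\hat D}$ with the intrinsic normal curvature $R^D$ on the $e_3$--$e_4$ block. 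Combined with the signature $\varepsilon_3\varepsilon_4=-1$ coming from the Lorentzian normal bundle, this yields precisely the $K^D\,e_3\wedge e_4$ contribution displayed in \eqref{Mink4GaussLaplMargTrppd}.

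The two remaining terms $-2\hat D_{e_1}\hat H\wedge e_2$ and $-2\,e_1\wedge\hat D_{e_2}\hat H$ are converted directly using the last line of \eqref{DhatRD}, i.e.\ $\hat D_{e_i}\hat H=D_{e_i}H$, which immediately produces $-2D_{e_1}H\wedge e_2$ and $-2\,e_1\wedge D_{e_2}H$. Assembling the three pieces gives \eqref{Mink4GaussLaplMargTrppd}. The only mildly delicate point of the argument is the bookkeeping in the case $\delta=\pm 1$: one must verify that the extra normal direction generated by the position vector contributes nothing beyond the constant shift $4\delta$ already absorbed into \eqref{PreMink4GaussLaplMargTrapped}. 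This is precisely what the parallel-normal property $\hat D_{e_i}x=0$ guarantees through \eqref{DhatRD}, so beyond careful substitution no further computation is needed.
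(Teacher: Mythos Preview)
Your proposal is correct and follows exactly the paper's approach: the paper's proof consists of the single sentence ``We use \eqref{PreMink4GaussLaplMargTrapped} and \eqref{DhatRD} on \eqref{Pre121Mink4GaussLaplMargTrapped} and obtain the following lemma,'' and your write-up is simply a careful unpacking of that substitution, handling the three summands of \eqref{Pre121Mink4GaussLaplMargTrapped} one by one and tracking the extra normal direction $x$ when $\delta=\pm 1$ via \eqref{DhatRD}.
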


Let $M$ be a marginally trapped surface in the Minkowski space-time and $\{e_1,e_2;e_3,e_4\}$ an orthonormal frame field on $M$. Then, a vector field $C$ defined on $M$ can be expressed as
\begin{subequations}\label{MinkoCVektAll}
\begin{eqnarray}
\label{MinkoCVektTanim}C&=&\sum\limits_{1\leq A<B\leq 4}\varepsilon_A\varepsilon_B C_{AB} e_A \wedge e_B,\\
\label{MinkoCABVektTanim} C_{AB}&=&\langle C, e_{A}\wedge e_{B}\rangle,\; 1\leq A\leq B\leq 4. 
\end{eqnarray}
\end{subequations}
Clearly, $C$ is constant if and only if its components satisfy
\begin{equation}
\label{MinkoCVectConstant}
e_i(C_{AB})=\left\langle C, \left(\widetilde\nabla_{e_i}e_{A}\right)\wedge e_{B}+e_{A}\wedge \widetilde\nabla_{e_i}e_{B}\right\rangle,\quad 1\leq A<B\leq 4,\quad i=1,2.
\end{equation}

Now, we give the following lemma.
\begin{Lemma}\label{PropMargTrap2ndKind}
Let $M$ be a partly marginally trapped surface in the space-time $R^4_1(\delta)$, $\delta\in\{1,0,-1\}$, $H$ its mean curvature vector in $R^4_1(\delta)$ and $\{e_1,e_2;e_3,e_4\}$ an orthonormal frame field on $M$. If a vector field in the form of
\begin{equation}\label{CVectPropMargTrap2ndKind}
C=C_{12}e_1\wedge e_2+C_{34}e_3\wedge e_4+C_{1}e_1\wedge H+C_{2}e_2\wedge H
\end{equation}
is constant, then either $A_H=0$ or $C=0$, where $C_{12},\ C_{34},\ C_1$ and $C_2$ are smooth functions.
\end{Lemma}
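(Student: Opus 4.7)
The plan is to exploit the constancy condition $\tilde\nabla_{e_i}C=0$, $i=1,2$, by projecting it onto the six basis $2$-vectors $e_A\wedge e_B$ ($1\le A<B\le 4$) and extracting a homogeneous linear system in the four unknowns $C_{12},C_{34},C_1,C_2$ whose coefficient matrix has determinant a non-zero multiple of the squared norm of $A_H$. Non-triviality of $A_H$ will then force $(C_{12},C_{34})=(0,0)$ and, as a consequence, $(C_1,C_2)=(0,0)$.

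First I reduce to the open set where $A_H\neq 0$ (in particular $H\neq 0$) and adapt the frame so that $H=h(e_3+e_4)$ with $h$ nowhere vanishing, $\varepsilon_3=-1,\varepsilon_4=1$. Because $A_H$ is self-adjoint on the positive-definite tangent plane and $\operatorname{tr}A_H=2\langle H,H\rangle=0$, I may write $A_He_1=ae_1+be_2,\;A_He_2=be_1-ae_2$, so that $A_H=0$ iff $a=b=0$. A crucial preparatory observation is that $D_{e_i}H\parallel H$: indeed $\langle D_{e_i}H,H\rangle=\tfrac12 e_i\langle H,H\rangle=0$, and $H$ spans its own orthogonal complement in the Lorentzian normal plane, so writing $D_{e_i}H=p_ie_3+q_ie_4$ one has $p_i=q_i$.

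Next I expand $\tilde\nabla_{e_i}(e_1\wedge e_2)$, $\tilde\nabla_{e_i}(e_3\wedge e_4)$, $\tilde\nabla_{e_i}(e_1\wedge H)$ and $\tilde\nabla_{e_i}(e_2\wedge H)$ via the Gauss and Weingarten formulas and project $\tilde\nabla_{e_i}C=0$ onto each of the six basis $2$-vectors. Most contributions vanish because one wedge factor is normal while the other is tangent. The $e_1\wedge e_2$-projection gives
\begin{equation*}
e_1 C_{12}=C_1b-C_2a,\qquad e_2 C_{12}=-C_1a-C_2b.
\end{equation*}
The $e_1\wedge e_3$- and $e_1\wedge e_4$-projections each contain a term $h\,e_iC_1$; upon subtracting them, these derivative terms and the $\omega_{12}$- and $D_{e_i}H$-contributions all cancel precisely because $p_i=q_i$, leaving the purely algebraic identities $C_{12}\langle A_He_i,e_2\rangle-C_{34}\langle A_He_i,e_1\rangle=0$ for $i=1,2$, i.e.
\begin{equation*}
\begin{pmatrix}b&-a\\-a&-b\end{pmatrix}\begin{pmatrix}C_{12}\\C_{34}\end{pmatrix}=\begin{pmatrix}0\\0\end{pmatrix}.
\end{equation*}

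The determinant of this matrix is $-(a^2+b^2)$, nonzero wherever $A_H\neq 0$, so $C_{12}\equiv C_{34}\equiv 0$ on that open set. Inserting $C_{12}\equiv 0$ back into the displayed pair of equations yields the same matrix acting on $(C_1,C_2)$, whence $C_1=C_2=0$. Thus $C$ vanishes as a vector in the ambient semi-Euclidean space on a nonempty open subset of $M$; being a constant ambient vector, $C$ must then be zero on all of $M$. The main obstacle is simply the signed bookkeeping of $2$-vector inner products in Lorentzian signature; the neat simplification that powers the argument is the fact $D_{e_i}H\parallel H$, which decouples the algebraic consequences of constancy from the derivative terms.
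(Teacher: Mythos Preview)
Your argument is correct and is essentially the same strategy as the paper's: both exploit the constancy condition to extract, via the constraint that $C$ has no component along $e_j\wedge L$ for the complementary null normal direction $L$, an algebraic $2\times 2$ system in $(C_{12},C_{34})$ whose determinant is a nonzero multiple of $\|A_H\|^2$; and then, once $C_{12}=C_{34}=0$, a second linear system forces the remaining coefficients to vanish. The paper phrases the first step as $e_j(C_{i3})=e_j(C_{i4})$ (its equation \eqref{Mink4GaussMargTrppdPW1e34}) combined with the covariant constancy relation \eqref{MinkoCVectConstant}, and writes the resulting constraints as a $4\times2$ matrix $\mathbf A$ acting on $(C_{12},C_{34})$; your ``subtract the $e_1\wedge e_3$- and $e_1\wedge e_4$-projections'' is exactly the same manoeuvre in your sign convention, effectively projecting $\tilde\nabla_{e_i}C=0$ onto $e_1\wedge H$. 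Your explicit observation $D_{e_i}H\parallel H$ is what makes that projection purely algebraic; the paper uses it implicitly through \eqref{Mink4GaussMargTrppdPW1e342}--\eqref{Mink4GaussMargTrppdPW1e343}. Your treatment also handles $\delta\in\{-1,0,1\}$ uniformly, whereas the paper separates $\delta=0$ from $\delta=\pm1$ but then reduces the latter to the former.
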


\begin{proof}
\textit{Case 1.} $\delta=0$. Without loss of generality, we may assume $\varepsilon_3=-\varepsilon_4=1$ and $H=\alpha(e_3-e_4)$ for a smooth function $\alpha$. We assume that $C$ is a non-zero constant vector. Then,  from  \eqref{MinkoCABVektTanim} and \eqref{CVectPropMargTrap2ndKind} we get $C_{i3}= C_{i4}$ from which we obtain
\begin{subequations}
\begin{eqnarray}
\label{Mink4GaussMargTrppdPW1e34} e_j\left(C_{i3}\right)=e_j\left(C_{i4}\right),\\
\label{Mink4GaussMargTrppdPW1e342} \left\langle C, \left(\nabla_{e_j}e_{i}\right)\wedge (e_3-e_4)\right\rangle=0,\\
\label{Mink4GaussMargTrppdPW1e343} \langle C, e_{i}\wedge D_{e_j}(e_3-e_4)\rangle=0,\quad i,j=1,2.
\end{eqnarray}
\end{subequations}
By a simple calculation using  \eqref{MinkoCVectConstant} and \eqref{Mink4GaussMargTrppdPW1e34}, we get
\begin{equation}\nonumber
\left\langle C,\left(\widetilde\nabla_{e_j}e_{i}\right)\wedge e_3+e_{i}\wedge \widetilde\nabla_{e_j}e_3\right\rangle=\left\langle C,\left(\widetilde\nabla_{e_j}e_{i}\right)\wedge e_4+e_{i}\wedge \widetilde\nabla_{e_j}e_4\right\rangle.
\end{equation}
By using \eqref{MEtomGauss} and \eqref{MEtomWeingarten} on this equation, we obtain
\begin{align}\nonumber
\begin{split}
&\left\langle C, \left(\nabla_{e_j}e_{i}\right)\wedge e_3+h(e_{i},e_j)\wedge e_3-e_{i}\wedge A_3{e_j}+e_{i}\wedge D_{e_j}e_3\right\rangle=\\&\left\langle C,\left(\nabla_{e_j}e_{i}\right)\wedge e_4+h(e_{i},e_j)\wedge e_4-e_{i}\wedge A_4{e_j}+e_{i}\wedge D_{e_j}e_4 \right\rangle.
\end{split}
\end{align}
From this equation, \eqref{MinkoCABVektTanim}, \eqref{Mink4GaussMargTrppdPW1e342} and \eqref{Mink4GaussMargTrppdPW1e343}  we obtain
\begin{equation}\label{LemmasCommonEq01}
h^4_{ij}C_{34}-h^3_{j(3-i)}C_{i(3-i)}=h^3_{ij}C_{34}-h^4_{j(3-i)}C_{i(3-i)}, \ i,j=1,2.
\end{equation}
Therefore, the functions $C_{12}=-C_{21},\ C_{34},\ h^3_{11},\ h^3_{12},\hdots,h^4_{22}$ satisfy
\begin{equation}\label{Mink4GaussMargTrppdPW1AC}
\mathbf A\left(
\begin{array}{c}
C_{12}\\
C_{34}
\end{array}\right)
=0
\end{equation}
for a $4\times2$ matrix $\mathbf A$ given by
$$\mathbf A=\left(
\begin{array}{cc}
h^3_{11}-h^4_{11}&-h^3_{12}+h^4_{12} \\
h^3_{12}-h^4_{12}&h^3_{11}-h^4_{11} \\
h^3_{12}-h^4_{12}&-h^3_{22}+h^4_{22} \\
h^3_{22}-h^4_{22}&h^3_{12}-h^4_{12}
\end{array}\right).$$

Now we will show that $A_3=A_4$. Suppose $A_3\neq A_4$ at a point $p$ of $M$. Then, from \eqref{Mink4GaussMargTrppdPW1AC} we see that there exists a neighborhood $\mathcal N_p$ of $p$ in $M$ on which $C_{34}$ and $C_{12}$ identically vanish. Thus, we have $e_i(C_{12})=e_i(C_{34})=0,\ i=1,2$ from which and \eqref{MinkoCVectConstant} we obtain
\begin{subequations}\label{Mink4GaussMargTrppdPW1ALL}
\begin{eqnarray}
\langle C,-h^3_{i1}e_2\wedge e_3+h^4_{i1}e_2\wedge e_4+h^3_{i2}e_1\wedge e_3-h^4_{i2}e_1\wedge e_4\rangle&=&0,\\
\langle C,-h^3_{i1}e_1\wedge e_4-h^3_{i2}e_2\wedge e_4+h^4_{i1}e_1\wedge e_3+h^4_{i2}e_2\wedge e_3\rangle&=&0
\end{eqnarray}
\end{subequations}
on $\mathcal N_p$. By taking into account $C_{i3}= C_{i4}$ and using \eqref{MinkoCABVektTanim} in \eqref{Mink4GaussMargTrppdPW1ALL}, we obtain
\begin{eqnarray}
\nonumber (h^3_{i2}-h^4_{i2})C_{13}+(-h^3_{i1}+h^4_{i1})C_{23}&=&0,\\
\nonumber (-h^3_{i1}+h^4_{i1})C_{13}+(-h^3_{i2}+h^4_{i2})C_{23}&=&0 ,\quad i=1,2
\end{eqnarray}
on $\mathcal N_p$. As $C\neq0$, from these equation we obtain $(h^3_{i1}-h^4_{i1})^2+(h^3_{i2}-h^4_{i2})^2=0, \ i=1,2$ which leads to  $A_3=A_4$ over $\mathcal N_p$. However, this is a contradiction. Hence, we have $A_3=A_4$ on $M$ which implies $A_H=0$.

\textit{Case 2}. $\delta=\pm1$. Let $x$ be the position vector of $M$, $C$ a mapping given by \eqref{CVectPropMargTrap2ndKind}. Then, \eqref{CVectPropMargTrap2ndKind} implies $\langle C, e_A\wedge x\rangle=0$ from which we obtain
\begin{align}
\begin{split} 
e_i\big(\langle C,e_A\wedge e_B\rangle\big)=&\langle e_i(C),e_A\wedge e_B\rangle+\left\langle C,\left(\hat\nabla_{e_i}e_A\right)\wedge e_B+e_A\wedge \hat\nabla_{e_i}e_B\right\rangle\\
=&\langle e_i(C),e_A\wedge e_B\rangle+\left\langle\left(\widetilde\nabla_{e_i}e_A\right)\wedge e_B+e_A\wedge \widetilde\nabla_{e_i}e_B\right\rangle,
\end{split}
\end{align}
where $\hat\nabla$ denotes the Levi-Civita connection of $E(4,1,\delta)$, $A,B=1,2,3,4$.
Therefore, $C$ is constant if and only if \eqref{MinkoCVectConstant} is satisfied. By the exactly same way with \textit{Case 1}, we obtain either $C=0$ or $A_H=0$.
\end{proof}
The following Lemma is obtained from the proof of Theorem 6.1 and Theorem 8.1 in \cite{ChenVeken2009Houston}.
\begin{Lemma}\cite{ChenVeken2009Houston}\label{LemmaParallelMeanCurvature}
Let $M$ be a partly marginally trapped surface in $R^4_1(\delta)$ for $\delta=\pm1$. Assume that mean curvature vector $H$ of  $M$ in $R^4_1(\delta)$ is parallel. Then, there exists a local orthonormal base field $\{e_1,e_2\}$ and a normal light-like vector field $f_4$ such that $h(e_1,e_1)=(1-a_1)H+a_2f_4$, $h(e_1,e_2)=0$, $h(e_2,e_2)=(1+a_1)H-a_2f_4$, $\langle H,f_4\rangle=-1$ for some smooth functions $a_1$ and $a_2$.
\end{Lemma}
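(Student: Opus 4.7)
The plan is to produce the frame in two stages: first choose a canonical null basis $\{H,f_4\}$ of the normal bundle, and then use the parallel mean curvature hypothesis together with the Ricci equation \eqref{MinkRicciEquation} to simultaneously diagonalize the shape operators $A_H$ and $A_{f_4}$.

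Since $M$ is marginally trapped on the open set where $H$ does not vanish, $H$ is a null section of the rank-$2$ normal bundle, which has Lorentzian signature $(1,1)$. Each such normal plane therefore contains a second null line, transverse to $\mathbb{R}H$; I take $f_4$ to be a smooth section of that line normalized by $\langle H,f_4\rangle=-1$ (such a section exists locally and is unique up to sign). Writing $D_Xf_4=\alpha(X)H+\beta(X)f_4$ and differentiating the identities $\langle f_4,f_4\rangle=0$ and $\langle H,f_4\rangle=-1$ against $DH=0$ immediately forces $\alpha\equiv\beta\equiv 0$, so $f_4$ is itself parallel. The Ricci equation \eqref{MinkRicciEquation} then yields
\[
\langle[A_H,A_{f_4}]X,Y\rangle=\langle R^D(X,Y)H,f_4\rangle=0,
\]
so the two self-adjoint shape operators commute on the Riemannian surface $M$, and hence admit a common orthonormal eigenframe $\{e_1,e_2\}$.

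In such a frame, $\langle h(e_1,e_2),H\rangle=\langle A_He_1,e_2\rangle=0$ and analogously $\langle h(e_1,e_2),f_4\rangle=0$; since $\{H,f_4\}$ spans the normal bundle, this forces $h(e_1,e_2)=0$. Expanding $h(e_i,e_i)=\alpha_iH+\beta_if_4$ and using $h(e_1,e_1)+h(e_2,e_2)=2H$ gives $\alpha_1+\alpha_2=2$ and $\beta_1+\beta_2=0$, so setting $a_1:=1-\alpha_1$ and $a_2:=\beta_1$ delivers exactly the stated expressions for $h(e_i,e_j)$. The main point requiring care is the smooth existence of the null section $f_4$, which is the reason the conclusion is only a local frame and why one must restrict to the open set on which $H$ does not vanish (this is where the qualifier \emph{partly} intervenes). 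Once the commutation $[A_H,A_{f_4}]=0$ is in place, no further calculation is required; in particular, the Codazzi equation plays no role here.
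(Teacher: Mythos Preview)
Your argument is correct. The paper does not actually supply its own proof of this lemma; it is quoted from \cite{ChenVeken2009Houston} and stated as being ``obtained from the proof of Theorem~6.1 and Theorem~8.1'' there, so there is nothing in the present paper to compare your argument against. Your proof is a clean, self-contained version: the key steps---choosing the second null normal $f_4$ with $\langle H,f_4\rangle=-1$, deducing $Df_4=0$ from $DH=0$, and then invoking the Ricci equation to obtain $[A_H,A_{f_4}]=0$ and hence a simultaneous eigenframe---are exactly the standard ones, and the remark that the construction is only local (and requires $H\neq 0$) is appropriate given the ``partly'' qualifier. One minor caveat worth noting explicitly is that the smooth simultaneous eigenframe is guaranteed away from points where the eigenvalues of, say, $A_H$ coincide; at such points every orthonormal frame diagonalizes $A_H$, and the commutation then lets one pick among these a frame diagonalizing $A_{f_4}$, so the local statement still holds.
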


\section{Marginally trapped surfaces in the Minkowski space-time}
\subsection{Pointwise 1-type Gauss map of the first kind}
It is well-known that for a non-maximal space-like surface $M$ in $\mathbb E^4_1$, being parallel of $H$ implies $K^D\equiv0$ (see, for example, \cite[Lemma 3.3]{Chen2009JMP}). Thus, the next proposition directly follows from Lemma \ref{Mink4GaussLaplMargTrapped}.
\begin{Prop}\label{DursunTurgayE41SpacelikeTHM}
Let $M$ be a marginally trapped surface in the Minkowski space-time. Then, $M$ has pointwise 1-type Gauss map of the first kind if and only if  $M$   has parallel mean curvature vector. Moreover, $M$ has harmonic Gauss map if and only if  $M$ is a flat surface with parallel mean curvature vector.
\end{Prop}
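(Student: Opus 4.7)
The plan is to derive the proposition directly from Lemma \ref{Mink4GaussLaplMargTrapped} by matching components in the orthogonal basis $\{e_A\wedge e_B:1\le A<B\le 4\}$ of $\Lambda^2(T\mathbb E^4_1)$. Setting $\delta=0$ in that lemma yields
\[
\Delta\nu = -2K\,\nu - 2K^D\, e_3\wedge e_4 - 2D_{e_1}H\wedge e_2 - 2e_1\wedge D_{e_2}H,
\]
so the entire proposition should reduce to reading off which basis directions the right-hand side can occupy and comparing with the right-hand side of \eqref{PW1TypeDefinition} with $C=0$.

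Expanding $D_{e_i}H=\phi_{i3}e_3+\phi_{i4}e_4$ for smooth functions $\phi_{i\alpha}$, I would observe that the last two terms of the Laplacian lie entirely in the four-dimensional subspace spanned by the mixed bivectors $e_1\wedge e_3,\ e_1\wedge e_4,\ e_2\wedge e_3,\ e_2\wedge e_4$, with coefficients $\pm 2\phi_{i\alpha}$; this subspace is orthogonal to both $\nu=e_1\wedge e_2$ and $e_3\wedge e_4$.

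For the forward direction of the first equivalence, assume $\Delta\nu=f\nu$. Projecting onto each of the six basis bivectors gives: on $\nu$, $f=-2K$; on $e_3\wedge e_4$, $K^D=0$; on each of the four mixed bivectors, $\phi_{i\alpha}=0$; hence $DH\equiv 0$, i.e., $H$ is parallel. Conversely, if $H$ is parallel then $DH=0$, and the cited fact (Lemma 3.3 of \cite{Chen2009JMP}) that a non-maximal space-like surface in $\mathbb E^4_1$ with parallel mean curvature vector has $K^D\equiv 0$ applies — non-maximality is automatic here since the mean curvature vector of a marginally trapped surface is light-like and in particular nonzero. Substituting these into Lemma \ref{Mink4GaussLaplMargTrapped} collapses it to $\Delta\nu=-2K\nu$, which is pointwise 1-type of the first kind.

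The harmonic Gauss map statement is the special case $f\equiv 0$: the same coefficient comparison now additionally forces $K=0$, giving a flat surface with parallel mean curvature vector; the converse is immediate since each of the four terms of the Laplacian formula then vanishes. I do not anticipate any genuine obstacle; the whole argument is a linear-algebraic coefficient comparison in $\Lambda^2(T\mathbb E^4_1)$ together with the single invoked result on $K^D$ for parallel-$H$ surfaces, which is why the author can say the proposition follows ``directly'' from Lemma \ref{Mink4GaussLaplMargTrapped}.
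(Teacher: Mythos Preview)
Your proposal is correct and follows exactly the approach the paper intends: the paper's ``proof'' is the single sentence that the proposition follows directly from Lemma~\ref{Mink4GaussLaplMargTrapped} together with the cited fact that parallel $H$ forces $K^D\equiv 0$, and you have simply written out the linear-algebraic coefficient comparison in $\Lambda^2(T\mathbb E^4_1)$ that makes this explicit. There is nothing to add or correct.
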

\begin{Remark}
See \cite{ChenVeken2009Houston}, for the classification of flat marginally trapped surfaces in the Minkowski space-time with parallel mean curvature vector.
\end{Remark}
\begin{theorem}\label{QGAClassificationThmE41Pw1stkind}
Let $M$ be a marginally trapped surface in the Minkowski space-time. Then, $M$ has \textbf{proper} pointwise 1-type Gauss map of the first kind if and only if it is a non-flat surface lying in $\mathbb S^3_1(r^2)$ or $\mathbb H^3(-r^2)$.
\end{theorem}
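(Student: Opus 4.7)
For the forward direction, suppose $M$ has proper pointwise 1-type Gauss map of the first kind, i.e., $\Delta\nu=f\nu$ for some non-constant smooth function $f$. Proposition \ref{DursunTurgayE41SpacelikeTHM} gives that the mean curvature vector $H$ is parallel; in particular $H\not\equiv 0$, so $M$ is non-maximal and $K^D\equiv 0$. Lemma \ref{Mink4GaussLaplMargTrapped} specialized to $\delta=0$ then reduces to $\Delta\nu=-2K\nu$, so $f=-2K$; properness makes $K$ non-constant, and in particular $M$ is non-flat. It remains to show that $M$ lies in some $\mathbb S^3_1(r^2)$ or $\mathbb H^3(-r^2)$. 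Since $H$ is parallel and light-like, complete it to a parallel null normal frame $\{H,f_4\}$ with $\langle H,f_4\rangle=-1$ and $\langle f_4,f_4\rangle=0$; constancy of these inner products together with $DH=0$ forces $Df_4=0$. Writing the second fundamental form $h$ in this null frame and exploiting the Codazzi equation should yield a constant vector $x_0\in\mathbb E^4_1$ for which $\langle x-x_0,x-x_0\rangle$ is a non-zero constant on $M$; the sign of this constant then places $M$ in either $\mathbb S^3_1(r^2)$ or $\mathbb H^3(-r^2)$.

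For the converse, assume $M$ is a non-flat marginally trapped surface in $\mathbb S^3_1(r^2)$; the case of $\mathbb H^3(-r^2)$ is analogous. The position vector $x$ is normal to $M$ with $\widetilde\nabla_X x=X$, so $A_x=-I$ and $D_X x=0$. Choose a unit time-like normal $v$ to $M$ inside $\mathbb S^3_1(r^2)$; using constancy of $\langle v,v\rangle=-1$ and $\langle v,x\rangle=0$ together with the fact that $A_v X$ is tangent, one finds $D_X v=0$. The mean curvature vector in $\mathbb E^4_1$ equals $H=-\tfrac12\mathrm{tr}(A_v)\,v-r^2 x$, and the condition $\langle H,H\rangle=0$ forces $\mathrm{tr}(A_v)=\pm 2r$; thus $H$ is parallel. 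Proposition \ref{DursunTurgayE41SpacelikeTHM} then gives $\Delta\nu=-2K\nu$, while the Gauss equation gives $K=r^2-\det(A_v)$. If $K$ were a non-zero constant, then $A_v$ would have two distinct constant principal curvatures $\lambda_1,\lambda_2$; the Codazzi equation $(\nabla_X A_v)Y=(\nabla_Y A_v)X$ for the codimension-one inclusion $M\subset\mathbb S^3_1(r^2)$ would then force the connection form of the diagonalizing frame to vanish, yielding $K=0$, a contradiction. Hence $K$ is non-constant and $M$ has proper pointwise 1-type Gauss map of the first kind.

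The principal obstacle is the forward step of locating the center $x_0$, which amounts to classifying non-flat marginally trapped surfaces in $\mathbb E^4_1$ with parallel mean curvature vector. The parallel null frame $\{H,f_4\}$ and the Codazzi identities should produce parallel normal vector fields whose shape operators combine with the identity in a way that exhibits the desired constant vector via a tangential integration. The complementary case of flat marginally trapped surfaces with parallel $H$ --- corresponding to harmonic Gauss map and hence excluded by properness --- is covered by the classification recalled from \cite{ChenVeken2009Houston}.
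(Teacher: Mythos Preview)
Your converse argument is correct and essentially matches the paper's: both exhibit a parallel orthonormal normal frame built from the position vector and a unit normal inside the pseudosphere, observe that marginal trapping pins $\mathrm{tr}\,A_v$ to the constant $\pm 2r$ so that $H$ is parallel, and then use the Codazzi equation on a principal frame to show that constant $K$ would force $\omega_{12}=0$ and hence $K=0$, contradicting non-flatness.

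The forward direction, however, is genuinely incomplete --- as you yourself flag in your final paragraph. After reducing to ``$M$ is non-flat with parallel $H$'', you must still place $M$ inside some $\mathbb S^3_1(r^2)$ or $\mathbb H^3(-r^2)$, and the phrases ``should yield'' and ``should produce'' do not constitute a proof. The paper does not argue this step from scratch either: it simply invokes \cite[Theorem~4.1]{ChenVeken2009Houston}, which is exactly the statement that a non-flat marginally trapped surface in $\mathbb E^4_1$ with parallel mean curvature vector lies in such a pseudosphere. Your direct strategy can in fact be completed: in the parallel null frame $\{H,f_4\}$ one may diagonalize so that $h(e_1,e_1)=(1-a_1)H+a_2f_4$, $h(e_1,e_2)=0$, $h(e_2,e_2)=(1+a_1)H-a_2f_4$; Codazzi then gives $e_j(a_1)/a_1=e_j(a_2)/a_2$, so the ratio $a_1/a_2$ is locally constant on the (non-empty, by non-flatness, since $K$ is a multiple of $a_1a_2$) set where $a_1a_2\neq 0$, and the parallel non-null normal $\xi=(a_1/a_2)H+f_4$ satisfies $A_\xi=-I$, giving the constant center $x_0=x-\xi$ with $\langle x-x_0,x-x_0\rangle=-2a_1/a_2\neq 0$. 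But none of this computation appears in your proposal, and without it --- or the citation the paper uses --- the forward implication is not established.
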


\begin{proof}
Let $M$ have proper pointwise 1-type Gauss map of the first kind. Then, $M$ is non-flat and it has parallel mean curvature vector, because of {Proposition }\ref{DursunTurgayE41SpacelikeTHM}. According to \cite[Theorem 4.1]{ChenVeken2009Houston}, a non-flat marginally trapped surface with parallel mean curvature vector is lying in $\mathbb S^3_1(r^2)$ or $\mathbb H^3(-r^2)$.

Conversely, let $M$ be a non-flat marginally trapped surface lying in $\mathbb S^3_1(r^2)$ or $\mathbb H^3(-r^2)$ and $x$ its position vector in the Minkowski space-time. Consider the orthonormal frame field $\{e_3,e_4\}$ of normal bundle  of $M$ such that $e_3=rx$ and $H=-\varepsilon_3 r(e_3-e_4).$ As $\widetilde\nabla_{e_i}x=e_i$, $e_3$ is parallel. Since $M$ has codimension 2, $De_4=0$. Therefore, we have $H$ is parallel. Thus, the Laplacian of the Gauss map $\nu$ of $M$ becomes $\Delta\nu=-2K\nu$ because of Lemma \ref{Mink4GaussLaplMargTrapped}. 

Now, we will show that $K$ is not constant. Consider an orthonormal base field $\{e_1,e_2\}$ of tangent bundle of $M$ such that the corresponding shape operators are 
\begin{equation}\label{LabelPROPER1}
A_3=-rI,\quad A_4=\mathrm{diag}(-r+\zeta,-r-\zeta)
\end{equation}
for a smooth function $\zeta$. Then, the Gauss equation \eqref{MinkGaussEquation} and \eqref{LabelPROPER1} imply $K=\varepsilon_3r^2\zeta^2$. 
We assume that $K$  is constant, or equivalently, $\zeta$ is constant. By a direct calculation using the Codazzi equation \eqref{MinkCodazzi}, we obtain $2\omega_{12}(e_i)\zeta=0,\ i=1,2$ which implies $M$ is flat. However, this is a contradiction. Hence, $K$ is not constant which yields that $M$ has \textit{proper}  pointwise 1-type Gauss map of the first kind.
\end{proof}
Moreover, we have
\begin{Corol}
Let $M$ be a partly marginally trapped surface in the Minkowski space-time with Gauss map $\nu$  satisfying \eqref{Glbl1TypeDefinition} for $C=0$. Then $\nu$ is harmonic.
\end{Corol}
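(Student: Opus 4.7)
The plan is to reduce the hypothesis to the pointwise 1-type setting already analyzed, and then combine Proposition \ref{DursunTurgayE41SpacelikeTHM} with Theorem \ref{QGAClassificationThmE41Pw1stkind} to force the eigenvalue to be zero. Since $C=0$, the hypothesis \eqref{Glbl1TypeDefinition} reads $\Delta\nu=\lambda\nu$ for some constant $\lambda\in\mathbb{R}$; this is the special case of \eqref{PW1TypeDefinition} in which $f=\lambda$ is constant and $C=0$, so $M$ has pointwise 1-type Gauss map of the first kind. Consequently, Proposition \ref{DursunTurgayE41SpacelikeTHM} applies, giving that $M$ has parallel mean curvature vector $H$.

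Next I would plug this into Lemma \ref{Mink4GaussLaplMargTrapped} with $\delta=0$. Because $M$ is marginally trapped it is not maximal, so the fact quoted before Proposition \ref{DursunTurgayE41SpacelikeTHM} (from \cite[Lemma 3.3]{Chen2009JMP}) gives $K^D\equiv 0$, and parallelism of $H$ gives $D_{e_1}H=D_{e_2}H=0$. Hence Lemma \ref{Mink4GaussLaplMargTrapped} collapses to
\begin{equation*}
\Delta\nu = -2K\,\nu.
\end{equation*}
Comparing with $\Delta\nu=\lambda\nu$ forces $K\equiv -\lambda/2$, so the Gaussian curvature is constant on $M$.

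It remains to show that this constant value must be zero. Suppose, for contradiction, that $K$ is a nonzero constant. Then $M$ is non-flat with parallel mean curvature vector, and by the classification of such surfaces (the result of Chen--Van der Veken cited inside the proof of Theorem \ref{QGAClassificationThmE41Pw1stkind}), $M$ lies in $\mathbb{S}^3_1(r^2)$ or $\mathbb{H}^3(-r^2)$. But in exactly that situation the argument in the second half of the proof of Theorem \ref{QGAClassificationThmE41Pw1stkind}, via the shape operators \eqref{LabelPROPER1} and the Codazzi equation \eqref{MinkCodazzi}, shows that $K=\varepsilon_3 r^2\zeta^2$ with $\zeta$ constant forcing $2\omega_{12}(e_i)\zeta=0$ and hence $M$ flat, contradicting $K\neq 0$. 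Therefore $K\equiv 0$, so $\lambda=0$ and $\Delta\nu\equiv 0$, proving that $\nu$ is harmonic.

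The only delicate point is the very last step: the proof of Theorem \ref{QGAClassificationThmE41Pw1stkind} is phrased as showing that non-flat marginally trapped surfaces in $\mathbb{S}^3_1(r^2)$ or $\mathbb{H}^3(-r^2)$ with parallel $H$ have \emph{non-constant} $K$, and one must read it in the reverse direction as ``constant $K$ forces flatness''. Once that observation is made, the rest of the argument is a direct chain of already-established results.
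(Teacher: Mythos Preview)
Your argument is essentially the one the paper intends: the corollary is stated without proof, immediately after Theorem~\ref{QGAClassificationThmE41Pw1stkind}, as a direct consequence of that theorem together with Proposition~\ref{DursunTurgayE41SpacelikeTHM}. The logical chain --- reduce to $\Delta\nu=-2K\nu$, then invoke the non-constancy of $K$ established in the converse half of Theorem~\ref{QGAClassificationThmE41Pw1stkind} to rule out a nonzero constant $K$ --- is exactly right.

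There is one slip worth flagging. The corollary is stated for \emph{partly} marginally trapped surfaces, yet you write ``Because $M$ is marginally trapped it is not maximal'' and appeal to Proposition~\ref{DursunTurgayE41SpacelikeTHM}, Theorem~\ref{QGAClassificationThmE41Pw1stkind}, and \cite[Lemma~3.3]{Chen2009JMP}, all of which are formulated for (fully) marginally trapped $M$. The repair is small: instead of routing through Proposition~\ref{DursunTurgayE41SpacelikeTHM}, compare $\Delta\nu=\lambda\nu$ directly with \eqref{Mink4GaussLaplMargTrppd} for $\delta=0$ (Lemma~\ref{Mink4GaussLaplMargTrapped} \emph{is} stated for partly marginally trapped surfaces) and read off $D_{e_i}H=0$, $K^D=0$, and $K=-\lambda/2$ by matching the $e_i\wedge e_\alpha$, $e_3\wedge e_4$, and $e_1\wedge e_2$ components; this avoids the non-maximality hypothesis altogether. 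Your contradiction then runs verbatim on the open subset $\{H\neq 0\}$, and since $K=-\lambda/2$ is a global constant, obtaining $K=0$ there forces $\lambda=0$ on all of $M$. You should also dispose of the trivial case $\lambda=0$ at the outset, since by the paper's definition ``pointwise 1-type'' requires $f\not\equiv 0$.
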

\subsection{Gauss map of pseudo-umbilical marginally trapped surfaces in $\mathbb E^4_1$}
A submanifold in  $\mathbb E^m_s$ is said to be pseudo-umbilical if there exists a function $\rho$ such that second fundamental form $h$ and  mean curvature vector $H$ of $M$ satisfy
\begin{equation}\label{PSEUMBDEFINITON}
\langle h(X,Y),H\rangle=\rho\langle X,Y \rangle
\end{equation}
for all tangent vector fields $X,Y$ of $M$. In this case, \eqref{PSEUMBDEFINITON} is satisfied for the function $\rho$   given by $\rho=\langle H,H \rangle$, \cite{CabrerizoComplete}. Thus, from \eqref{MinkAhhRelatedby} and \eqref{PSEUMBDEFINITON} one can see that a marginally trapped surface in $\mathbb E^4_1$ is pseudo-umbilical if and only if  $A_H=0$.

\begin{Remark}\label{RemarkONChenIshikawa1991Bih}
Let $M$ be a (partly) marginally trapped surface in $\mathbb E^4_1$. If  $A_H\equiv0$ on $M$, then $M$ is congruent to the surface given by 
\begin{equation}\label{MinkMargTrapnottooComplete}
x(u,v) = (\phi (u, v), u, v, \phi (u, v)),
\end{equation}
for a smooth function $\phi:\Omega\rightarrow\mathbb R$, where $\Omega$ is an open subset of $\mathbb R^2$ (see the proof of Theorem 6.1 in \cite{ChenIshikawa1991Bih} and also \cite{CabrerizoComplete}). 
\end{Remark}
Let $M$ be the surface in $\mathbb E^4_1$ given by \eqref{MinkMargTrapnottooComplete} for a smooth function $\phi:\Omega\rightarrow\mathbb R$. We consider the local orthonormal base field $\{e_1,e_2\}$ of tangent bundle of $M$ given by
\begin{equation}
\label{MinkMargTrapnottooCompletee1} e_1=\frac{\partial}{\partial u}=(\phi_u,1,0,\phi_u),\quad e_2=\frac{\partial}{\partial v}=(\phi_v,0,1,\phi_v).
\end{equation}
By a direct computation, we obtain $\omega_{12}\equiv0$ and $K=K^D\equiv0$ on $M$. Moreover,  mean curvature vector $H$ becomes 
\begin{equation}\label{MinkMarge3e4Eq}
H=-\frac{\Delta\phi} 2 (1,0,0,1).
\end{equation}
Thus, from \eqref{Mink4GaussLaplMargTrppd} for $\delta=0$ we obtain
\begin{equation}\label{MinkMargTrapnottooCompleteMapDnu}
\Delta\nu=(1,0,0,1)\wedge\left(-{\Delta\phi_v}  e_1+{\Delta\phi_u} e_2\right).
\end{equation}
On the other hand, from \eqref{MinkMargTrapnottooCompletee1} we have
\begin{equation}\label{MinkMargTrapnottooCompleteMapnu}
\nu=(1,0,0,1)\wedge\left(-\phi_ve_1+\phi_ue_2\right)+(0,1,0,0)\wedge (0,0,1,0).
\end{equation}
In the next proposition, we obtain a family of pseudo-umbilical surfaces with  pointwise 1-type Gauss map of the second kind.
\begin{Prop}\label{PseudoUmbSurf2ndkind}
Let $\Omega$ be an open, bounded subset of $\mathbb R^2$, $\psi,\phi:\Omega\rightarrow \mathbb R$ some smooth functions satisfying 
\begin{subequations}\label{PDEforPHI}
\begin{eqnarray}
\label{PDEforPHI001} \Delta\psi=F(\psi),\\
\label{PDEforPHI002} \phi(u,v)=\psi(u,v)+c_1u+c_2v
\end{eqnarray}
\end{subequations}
for some constants $c_1,\ c_2$ and a differentiable non-constant function $F:\psi(\Omega)\rightarrow\mathbb R$ such that the function $f:\Omega\rightarrow\mathbb R$ defined by 
\begin{eqnarray}
\label{PDEforPHIFuncf} f(u,v)&=&F'(\psi(u,v))
\end{eqnarray}
is smooth. Consider the partly marginally trapped surface $M$ in the Minkowski space-time given by \eqref{MinkMargTrapnottooComplete}.

Then, the Gauss map of $M$ is pointwise 1-type of the second kind and satisfies \eqref{PW1TypeDefinition} for the smooth function $f$ given in \eqref{PDEforPHIFuncf} and the constant vector
\begin{eqnarray}
\label{PDEforPHIConstC} C&=&(1,0,0,1)\wedge (0,-c_2,c_1,0)-(0,1,0,0)\wedge(0,0,1,0).
\end{eqnarray}
\end{Prop}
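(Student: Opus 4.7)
The plan is to verify the pointwise 1-type relation \eqref{PW1TypeDefinition} by inserting the hypothesis $\Delta\psi=F(\psi)$ into the explicit Laplacian formula \eqref{MinkMargTrapnottooCompleteMapDnu} derived just above the statement. The one analytic input needed is that on the flat parametrization $\{\partial_u,\partial_v\}$ of \eqref{MinkMargTrapnottooComplete}, the Chen Laplacian reduces to $-\partial_u^2-\partial_v^2$, so it annihilates the linear terms in $\phi=\psi+c_1u+c_2v$; consequently $\Delta\phi=\Delta\psi=F(\psi)$. Since $\Delta$ commutes with $\partial_u$ and $\partial_v$, differentiating this identity and using the chain rule yields
\begin{equation*}
\Delta\phi_u=F'(\psi)\,\psi_u=f\,(\phi_u-c_1),\qquad \Delta\phi_v=F'(\psi)\,\psi_v=f\,(\phi_v-c_2).
\end{equation*}

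Next I would substitute these two identities into \eqref{MinkMargTrapnottooCompleteMapDnu} and factor out $f$, splitting the right-hand side as
\begin{equation*}
\Delta\nu=f\,(1,0,0,1)\wedge(-\phi_ve_1+\phi_ue_2)+f\,(1,0,0,1)\wedge(c_2e_1-c_1e_2).
\end{equation*}
Formula \eqref{MinkMargTrapnottooCompleteMapnu} identifies the first summand with $f\bigl(\nu-(0,1,0,0)\wedge(0,0,1,0)\bigr)$. For the second summand I would use the decomposition
\begin{equation*}
c_2e_1-c_1e_2=(c_2\phi_u-c_1\phi_v)(1,0,0,1)+(0,c_2,-c_1,0),
\end{equation*}
so that, upon wedging with $(1,0,0,1)$, the parallel term vanishes and only a scalar multiple of the fixed 2-vector $(1,0,0,1)\wedge(0,-c_2,c_1,0)$ survives. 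Collecting the two summands produces exactly a relation of the form $\Delta\nu=f(\nu+C)$ with $C$ the constant 2-vector of \eqref{PDEforPHIConstC}.

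It then remains to observe that $C$ is constant because it is built from fixed vectors of $\mathbb E^4_1$, that $C\neq 0$ thanks to the non-vanishing tangential piece $(0,1,0,0)\wedge(0,0,1,0)$, and that $f=F'(\psi)\not\equiv 0$ since $F$ is non-constant; these three facts together ensure the Gauss map is pointwise 1-type \emph{of the second kind}. The step I expect to require the most care is the wedge-product simplification that removes the $(1,0,0,1)$-parallel component of $c_2e_1-c_1e_2$; once that cancellation is carried out, the proposition follows mechanically from the commutation of $\Delta$ with partial derivatives and the chain rule applied to $F(\psi)$.
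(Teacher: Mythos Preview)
Your argument is essentially identical to the paper's: differentiate $\Delta\phi=F(\psi)$ via the chain rule, substitute into \eqref{MinkMargTrapnottooCompleteMapDnu}, and compare with \eqref{MinkMargTrapnottooCompleteMapnu} to read off $C$; your extra wedge-product simplification of $c_2e_1-c_1e_2$ just makes explicit a step the paper leaves implicit. One caution on signs: your (correct) identity $\Delta\phi_u=f(\phi_u-c_1)$ actually leads to $C=(1,0,0,1)\wedge(0,c_2,-c_1,0)-(0,1,0,0)\wedge(0,0,1,0)$, with the opposite sign on the first summand compared to \eqref{PDEforPHIConstC}; the paper instead writes $\Delta\phi_u=(\phi_u+c_1)F'(\psi)$, a sign slip that happens to reproduce its own stated $C$, so the mismatch is cosmetic (absorb it into $c_1,c_2$) rather than a gap in your reasoning.
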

\begin{proof}
Let $e_1,\ e_2$ be the vector fields given in \eqref{MinkMargTrapnottooCompletee1}. If $\phi$ satisfies \eqref{PDEforPHI}, then we have $\Delta\phi_u=(\phi_u+c_1)F'(\psi)$ and $\Delta\phi_v=(\phi_v+c_2)F'(\psi)$. These equations, \eqref{MinkMargTrapnottooCompleteMapDnu} and \eqref{PDEforPHIFuncf} imply
$\Delta\nu=f(1,0,0,1)\wedge\Big(-\phi_ve_1+\phi_ue_2-c_2e_1+c_1e_2\Big)$
from which and \eqref{MinkMargTrapnottooCompleteMapnu} we obtain
\begin{equation}\nonumber
\Delta\nu=f\Big(e_1\wedge e_2-(0,1,0,0)\wedge(0,0,1,0)+(1,0,0,1)\wedge(-c_2e_1+c_1e_2)\Big).
\end{equation}
From this equation, \eqref{MinkMargTrapnottooCompletee1} and \eqref{PDEforPHIConstC} we see that \eqref{PW1TypeDefinition} is satisfied for the constant vector $C$ and the smooth function  $f\not\equiv 0$ given in the proposition.  Hence, $M$ has pointwise 1-type Gauss map of the second kind.
\end{proof}

Next, we obtain the following classification theorem of pseudo-umbilical marginally trapped surfaces in $\mathbb E^4_1$ with pointwise 1-type Gauss map.
\begin{Prop}\label{PropMargTrap2ndKindExampleProp}
Let $M$ be a pseudo-umbilical marginally trapped surface in the Minkowski space-time. Then, $M$ has pointwise 1-type Gauss map of the second kind if and only if $M$ is congruent to a surface given in Proposition \ref{PseudoUmbSurf2ndkind}.
\end{Prop}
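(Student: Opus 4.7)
The direction ``if'' is exactly Proposition~\ref{PseudoUmbSurf2ndkind}, so only the converse requires work. Assume $M$ is pseudo-umbilical marginally trapped in $\mathbb E^4_1$ and has pointwise 1-type Gauss map of the second kind, so $\Delta\nu=f(\nu+C)$ for some smooth $f\not\equiv 0$ and some nonzero constant bivector $C$. Pseudo-umbilicity is equivalent (for marginally trapped $M$) to $A_H=0$, so Remark~\ref{RemarkONChenIshikawa1991Bih} realizes $M$, up to an isometry of $\mathbb E^4_1$, as the graph \eqref{MinkMargTrapnottooComplete} of some smooth $\phi$ on an open set $\Omega\subset\mathbb R^2$, equipped with the orthonormal tangent frame \eqref{MinkMargTrapnottooCompletee1}. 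The closed-form expressions \eqref{MinkMargTrapnottooCompleteMapnu} and \eqref{MinkMargTrapnottooCompleteMapDnu} reduce everything to a PDE analysis of $\phi$.

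The plan is to expand $\Delta\nu-f\nu=fC$ in the constant basis $\{\partial_i\wedge\partial_j\}_{i<j}$ of $\Lambda^2\mathbb E^4_1$ and match the six scalar components. The $\partial_0\wedge\partial_3$ component is trivially $0=0$, so $C_{03}=0$. The $\partial_1\wedge\partial_2$ component reads $-f=fC_{12}$, giving $C_{12}=-1$; this already accounts for the piece $-(0,1,0,0)\wedge(0,0,1,0)$ of \eqref{PDEforPHIConstC}. Matching the four remaining components (all of $(1,0,0,1)\wedge(\cdot)$-type) produces two independent scalar equations
\[
\Delta\phi_u=f(\phi_u+\gamma_1),\qquad \Delta\phi_v=f(\phi_v+\gamma_2),
\]
for constants $\gamma_1,\gamma_2$, and simultaneously forces $C$ to take exactly the form \eqref{PDEforPHIConstC} with $c_1,c_2$ determined by $\gamma_1,\gamma_2$.

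Next, substitute $\psi=\phi-c_1u-c_2v$, the constants $c_i$ being chosen so that $\psi_u=\phi_u+\gamma_1$ and $\psi_v=\phi_v+\gamma_2$. Because $c_1u+c_2v$ is harmonic, the PDEs above collapse to $(\Delta\psi)_u=f\psi_u$ and $(\Delta\psi)_v=f\psi_v$, i.e.\ $d(\Delta\psi)=f\,d\psi$ on $\Omega$. Consequently $d(\Delta\psi)\wedge d\psi=0$, so on the open set $\{\nabla\psi\neq 0\}$ the Laplacian $\Delta\psi$ is locally a function of $\psi$; writing $\Delta\psi=F(\psi)$ and differentiating recovers $F'(\psi)=f$, which is precisely \eqref{PDEforPHIFuncf}. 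Since $f\not\equiv 0$, $F$ is non-constant, and continuity together with connectedness of $\Omega$ assembles the locally defined $F$ into a differentiable function on $\psi(\Omega)$. This exhibits $M$ in exactly the form described in Proposition~\ref{PseudoUmbSurf2ndkind}.

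The bulk of the work is the coefficient matching, which is purely mechanical given the explicit formulas already at hand. The one subtle point is passing from $d(\Delta\psi)=f\,d\psi$ to a globally well-defined representation $\Delta\psi=F(\psi)$: one must control the critical set $\{\nabla\psi=0\}$ and check that the local primitive $F$ patches consistently across the level sets of $\psi$. This is settled by a standard connectedness argument combined with the implicit function theorem at regular points of $\psi$, and constitutes the main (though still routine) obstacle in the proof.
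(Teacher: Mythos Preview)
Your proposal is correct and follows essentially the same strategy as the paper: reduce to the graph representation \eqref{MinkMargTrapnottooComplete}, extract the two scalar PDEs $\Delta\phi_u=f(\phi_u+c_1)$ and $\Delta\phi_v=f(\phi_v+c_2)$, and then integrate to obtain $\Delta\psi=F(\psi)$. The only difference is in how the two PDEs are obtained: the paper rewrites the pointwise 1-type condition as $(1,0,0,1)\wedge\zeta=C+(0,1,0,0)\wedge(0,0,1,0)$ for a tangent field $\zeta$, differentiates, and uses $\omega_{12}=0$ to conclude the components of $\zeta$ are constant; you instead match coefficients directly in the constant basis $\{\partial_i\wedge\partial_j\}$, which is more computational but has the minor advantage of never dividing by $f$ (so the paper's separate treatment of the set where $f$ vanishes is unnecessary in your version).
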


\begin{proof}
Let $M$ be a marginally trapped surface in the Minkowski space-time given by \eqref{MinkMargTrapnottooComplete}.  Then, the Gauss map $\nu$ of $M$ satisfies \eqref{MinkMargTrapnottooCompleteMapDnu} and \eqref{MinkMargTrapnottooCompleteMapnu}, where $e_1,\ e_2$ are the vector fields given by \eqref{MinkMargTrapnottooCompletee1}.

Now, in order to prove the necessary condition, we assume that $M$ has pointwise 1-type Gauss map of the second kind. Then, the equation \eqref{PW1TypeDefinition} is satisfied for a smooth function $f\not\equiv 0$ and a constant vector $C\neq0$. 

If $f\equiv0$ on an open, connected subset $\mathcal O$ of $M$, then we have $(\Delta\nu)\big|_\mathcal O\equiv 0$ and \eqref{MinkMargTrapnottooCompleteMapDnu} implies $(\Delta\phi)\big|_\mathcal O\equiv c_\mathcal O$ for a non-zero constant $c_\mathcal O\in\mathbb R$. Thus, \eqref{PDEforPHI} is satisfied for $F=c_\mathcal O$ on $\mathcal O$. Now we assume that $f$ is non-vanishing on $M$.

From \eqref{PW1TypeDefinition}, \eqref{MinkMargTrapnottooCompleteMapDnu} and \eqref{MinkMargTrapnottooCompleteMapnu} we obtain
\begin{equation}\label{MinkMargTrapnottooCompleteMapnuzeta}
(1,0,0,1)\wedge\zeta=C+(0,1,0,0)\wedge (0,0,1,0),
\end{equation}
where $\zeta$ is the tangent vector field given by 
\begin{equation}\label{MinkMargTrapnottooCompleteMapnuzeta2}
\zeta= \left(-\frac{\Delta\phi_v} {f}+\phi_v\right) e_1+\left(\frac{\Delta\phi_u} {f}-\phi_u\right) e_2.
\end{equation}
By using \eqref{MinkMargTrapnottooCompleteMapnuzeta}, we obtain 
$$
(1,0,0,1)\wedge\widetilde\nabla_{e_i}\zeta=e_i\big((1,0,0,1)\wedge\zeta\big)=e_i\big(C+(0,1,0,0)\wedge (0,0,1,0)\big)=0,\ i=1,2$$ 
from which we see that $\widetilde\nabla_{e_i}\zeta$ is proportional to $(1,0,0,1)$ which is normal to $M$ because of \eqref{MinkMarge3e4Eq}. Thus, $\nabla_{e_i}\zeta=0$  and \eqref{MinkMargTrapnottooCompleteMapnuzeta2} implies
$$\nabla_{e_i}\left(\left(-\frac{\Delta\phi_v} {f}+\phi_v\right) e_1+\left(\frac{\Delta\phi_u} {f}-\phi_u\right)e_2\right)=0,\quad i=1,2.$$
As $\omega_{12}=0$, these equations imply 
\begin{equation}\label{MinkMargTrapnottooCompletePW1GYKos}
\frac{\Delta\phi_u} {f}-\phi_u=c_1 \quad \mathrm{and}\quad \frac{\Delta\phi_v} {f}-\phi_v=c_2 
\end{equation}
for some constants $c_1$ and $c_2$ from which we have $(\phi_v+c_2)\Delta\psi_u=(\phi_u+c_1)\Delta\psi_v$ where $\psi$ is the function given by \eqref{PDEforPHI001}. By solving this equation, we obtain \eqref{PDEforPHI002} for a differentiable function $F$. 

Converse is given in Proposition \ref{PseudoUmbSurf2ndkind}.
\end{proof}

\subsection{Pointwise 1-type Gauss map of the second kind}
By using  Lemma \ref{PropMargTrap2ndKind}, Proposition \ref{PropMargTrap2ndKindExampleProp} and Remark \ref{RemarkONChenIshikawa1991Bih}, we state the following classification theorem.
\begin{theorem}\label{MargTrppdSrfsClassThm2ndKIND}
Let $M$ be a marginally trapped surface in the Minkowski space-time. Then, $M$ has pointwise 1-type Gauss map of the second kind if and only if it is congruent to a surface given in Proposition \ref{PseudoUmbSurf2ndkind}.
\end{theorem}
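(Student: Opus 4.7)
The plan is to reduce the statement to the pseudo-umbilical case already handled in Proposition \ref{PropMargTrap2ndKindExampleProp}. The converse direction is immediate from Proposition \ref{PseudoUmbSurf2ndkind}, so I would concentrate on the necessity.

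Assume $\Delta\nu=f(\nu+C)$ with $f\not\equiv 0$ and constant $C\neq 0$, and fix an adapted orthonormal frame $\{e_1,e_2;e_3,e_4\}$ on $M$. Applying Lemma \ref{Mink4GaussLaplMargTrapped} with $\delta=0$ and subtracting $f\nu$ gives, on the open set $\{f\neq 0\}$,
$$C=-\tfrac{2K+f}{f}\,e_1\wedge e_2-\tfrac{2K^D}{f}\,e_3\wedge e_4-\tfrac{2}{f}\,D_{e_1}H\wedge e_2-\tfrac{2}{f}\,e_1\wedge D_{e_2}H.$$

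The key structural observation is a rank-two normal bundle rigidity fact. Since $H$ is light-like, differentiating $\langle H,H\rangle=0$ yields $\langle D_{e_i}H,H\rangle=0$; and because the normal bundle is rank-$2$ Lorentzian, the $H$-orthogonal subspace of each fibre is precisely the null line $\mathrm{span}(H)$. Hence $D_{e_i}H=\mu_i H$ for smooth functions $\mu_1,\mu_2$, and the two derivative terms in the previous display collapse into scalar multiples of $e_1\wedge H$ and $e_2\wedge H$ respectively. Consequently $C$ takes the very specific form \eqref{CVectPropMargTrap2ndKind} prescribed in Lemma \ref{PropMargTrap2ndKind}. Since $C\neq 0$, that lemma forces $A_H\equiv 0$, i.e., $M$ is pseudo-umbilical.

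Once pseudo-umbilicity is secured, I would close the argument in two short steps. Remark \ref{RemarkONChenIshikawa1991Bih} guarantees that $M$ is congruent to a surface of the form \eqref{MinkMargTrapnottooComplete} for some smooth $\phi:\Omega\to\mathbb{R}$. Direct appeal to Proposition \ref{PropMargTrap2ndKindExampleProp} then identifies this $\phi$ as one coming from the construction in Proposition \ref{PseudoUmbSurf2ndkind}, yielding the desired congruence.

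The main subtlety to be careful about will be the zero set of $f$: the decomposition of $C$ displayed above is only pointwise valid where $f\neq 0$. However, on any open subset where $f\equiv 0$ one has $\Delta\nu\equiv 0$, and the $f\equiv 0$ branch handled in the proof of Proposition \ref{PropMargTrap2ndKindExampleProp} already covers this situation; the conclusion $A_H\equiv 0$ then propagates to all of $M$ by continuity, so no additional work is needed.
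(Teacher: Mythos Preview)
Your proof follows essentially the same route as the paper's: use Lemma \ref{Mink4GaussLaplMargTrapped} to see that $C$ has the form \eqref{CVectPropMargTrap2ndKind}, invoke Lemma \ref{PropMargTrap2ndKind} to obtain $A_H=0$, then apply Remark \ref{RemarkONChenIshikawa1991Bih} and Proposition \ref{PropMargTrap2ndKindExampleProp}. Your explicit justification that $D_{e_i}H$ is proportional to $H$ (via the rank-two Lorentzian normal bundle) is a welcome addition the paper leaves implicit; your discussion of the zero set of $f$ is slightly circular (Proposition \ref{PropMargTrap2ndKindExampleProp} already assumes pseudo-umbilicity), but the paper does not address this point at all, so you are at least as careful.
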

\begin{proof}
Let $M$  have pointwise 1-type Gauss map of the second kind. Then, its Gauss map satisfies \eqref{PW1TypeDefinition} for a smooth function $f\not\equiv0$ and a constant vector field $C\neq0$. From \eqref{PW1TypeDefinition} and \eqref{Mink4GaussLaplMargTrppd} for $\delta=0$ we see that $C$ is of the form of \eqref{CVectPropMargTrap2ndKind}. Since $C$ is constant, Lemma \ref{PropMargTrap2ndKind} implies $A_H=0$. Thus, $M$ is congruent to a surface given in Proposition \ref{PseudoUmbSurf2ndkind}.

Converse is given in Proposition \ref{PseudoUmbSurf2ndkind}.
\end{proof}
Next, we want to give following corollary obtained from Lemma \ref{PropMargTrap2ndKind} and \cite[Theorem 4.3]{Chen2009JMP}.
\begin{Corol}
There are no marginally trapped surface with pointwise 1-type Gauss map of the second kind in the Minkowski space-time lying in the light cone ${\mathcal{LC}}^3$.
\end{Corol}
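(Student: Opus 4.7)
The plan is to argue by contradiction. Assume $M\subset\mathcal{LC}^3$ is a marginally trapped surface whose Gauss map satisfies $\Delta\nu = f(\nu+C)$ with $f\not\equiv 0$ and a nonzero constant vector $C$. By Lemma~\ref{Mink4GaussLaplMargTrapped} with $\delta=0$, $\Delta\nu$ is a linear combination of $\nu = e_1\wedge e_2$, $e_3\wedge e_4$, and $e_i\wedge H$; matching this against $f(\nu+C)$ places $C$ in the form \eqref{CVectPropMargTrap2ndKind}, so Lemma~\ref{PropMargTrap2ndKind} immediately yields $A_H\equiv 0$.

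The second step exploits the light-cone hypothesis to promote $A_H\equiv 0$ to parallelism of $H$. Since $\langle x,x\rangle\equiv 0$, the position vector $x$ is a null normal vector field, and $\widetilde\nabla_X x = X$ reads off $A_x = -I$ and $D_X x = 0$. In the null normal frame $\{x,H\}$, observing $\langle H,x\rangle = \tfrac12\,\mathrm{tr}\,A_x = -1$, the constraints $A_H\equiv 0$ and $A_x = -I$ pin down the second fundamental form to $h(X,Y) = \langle X,Y\rangle H$, so $M$ is totally umbilical in $\mathbb{E}^4_1$. The Gauss equation then gives $K\equiv 0$, while Codazzi reduces to $\langle Y,Z\rangle D_X H = \langle X,Z\rangle D_Y H$; choosing orthogonal tangent fields $X,Y$ then forces $D_Y H\equiv 0$. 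This is essentially the content of \cite[Theorem~4.3]{Chen2009JMP} as applied here, and yields that $H$ is parallel.

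Finally, Proposition~\ref{DursunTurgayE41SpacelikeTHM} tells us that a marginally trapped surface with parallel mean curvature vector in $\mathbb{E}^4_1$ has pointwise 1-type Gauss map of the \emph{first} kind, contradicting the assumption that it is of the second kind. If one prefers not to treat ``first kind'' and ``second kind'' as mutually exclusive by convention, the stronger conclusion from the flat subcase of Proposition~\ref{DursunTurgayE41SpacelikeTHM}---harmonic Gauss map $\Delta\nu\equiv 0$---combined with $f\not\equiv 0$ forces $\nu\equiv -C$ on an open set, hence $M$ would be a space-like $2$-plane; but no space-like $2$-plane fits inside $\mathcal{LC}^3$ (its tangent vectors would have to be null and mutually orthogonal, impossible in a positive-definite plane). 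Either way, the contradiction closes the argument. The main obstacle I expect is the algebraic step in paragraph two extracting $h(X,Y) = \langle X,Y\rangle H$ from $A_x = -I$ and $A_H\equiv 0$; once that is settled, the Gauss/Codazzi deductions and the final appeal to Proposition~\ref{DursunTurgayE41SpacelikeTHM} are short.
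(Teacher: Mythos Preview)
Your argument is correct and follows the same two-step route as the paper: first invoke Lemma~\ref{PropMargTrap2ndKind} to obtain $A_H\equiv 0$, then use the light-cone hypothesis to reach a contradiction---the paper delegates this second step to \cite[Theorem~4.3]{Chen2009JMP}, whereas you unpack it directly by building the null normal frame $\{x,H\}$, deducing $h(X,Y)=\langle X,Y\rangle H$, and reading off $K=0$ and $DH=0$ from Gauss--Codazzi. Your extra care with the ``first kind versus second kind'' exclusivity (reducing to $\Delta\nu\equiv 0$ via Proposition~\ref{DursunTurgayE41SpacelikeTHM} and ruling out a planar piece) is a welcome clarification that the paper leaves implicit.
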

\subsection{Classification of marginally trapped surfaces with pointwise 1-type Gauss map}
By combaining Theorem \ref{QGAClassificationThmE41Pw1stkind} and Theorem \ref{MargTrppdSrfsClassThm2ndKIND}, we have the following classification theorem:
\begin{theorem}\label{QGAClassificationThm}
Let $M$ be a marginally trapped surface in the Minkowski space-time and $\mathcal M$ the open subset of $M$ given by $\mathcal M=\{p\in M|\Delta\nu|_p\neq0\}$. Then, $M$ has pointwise 1-type Gauss map if and only if $\mathcal M$ is congruent to  one of the following two type of surfaces. 
\begin{enumerate}
\item[(i)] a non-flat surface  lying in the de Sitter space-time $S^3_1(r^2)$ or $H^3(-r^2)$ for $r>0$.
\item[(ii)] a surface congruent to a surface given in Proposition \ref{PseudoUmbSurf2ndkind}. 
\end{enumerate}
\end{theorem}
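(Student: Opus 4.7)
The plan is to reduce the theorem to a direct combination of Theorem \ref{QGAClassificationThmE41Pw1stkind} and Theorem \ref{MargTrppdSrfsClassThm2ndKIND}, dispatching on whether the constant vector $C$ in the pointwise 1-type equation $\Delta\nu = f(\nu+C)$ is zero or not. On the subset $\mathcal M$, where $\Delta\nu$ does not vanish, the smooth function $f$ cannot vanish pointwise (and $\nu+C$ cannot be zero), so throughout the argument I would restrict attention to $\mathcal M$, where $f$ is nowhere zero.

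For the forward direction, suppose $M$ has pointwise 1-type Gauss map. In the case $C=0$ the equation reads $\Delta\nu = f\nu$ on $\mathcal M$, and I must exclude the possibility that $f$ is a non-zero constant there: were it so, $\nu$ would satisfy the global 1-type equation \eqref{Glbl1TypeDefinition} with $C=0$, and the Corollary stated just after Theorem \ref{QGAClassificationThmE41Pw1stkind} would force $\nu$ to be harmonic, contradicting $\Delta\nu|_{\mathcal M}\neq 0$. Hence $f$ is non-constant on $\mathcal M$, so $\mathcal M$ has \emph{proper} pointwise 1-type Gauss map of the first kind, and Theorem \ref{QGAClassificationThmE41Pw1stkind} identifies $\mathcal M$ as a non-flat marginally trapped surface in $\mathbb S^3_1(r^2)$ or $\mathbb H^3(-r^2)$, placing it in class (i). In the case $C\neq 0$, $\mathcal M$ inherits pointwise 1-type Gauss map of the second kind, and Theorem \ref{MargTrppdSrfsClassThm2ndKIND} shows that it is congruent to a surface from Proposition \ref{PseudoUmbSurf2ndkind}, placing it in class (ii).

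The converse direction is essentially free: if $\mathcal M$ is of type (i), then by the converse of Theorem \ref{QGAClassificationThmE41Pw1stkind} it has proper pointwise 1-type Gauss map of the first kind; if $\mathcal M$ is of type (ii), then Proposition \ref{PseudoUmbSurf2ndkind} directly supplies a pointwise 1-type equation of the second kind. In either instance the smooth function $f$ vanishes on $M\setminus\mathcal M$ by the very definition of $\mathcal M$, so the equation \eqref{PW1TypeDefinition} obtained on $\mathcal M$ continues to hold on all of $M$ with the same constant vector $C$.

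The only nontrivial point I anticipate is the logical housekeeping around $\mathcal M$: in the forward direction, one must verify that the subset $\mathcal M$ is genuinely captured by the hypotheses of Theorems \ref{QGAClassificationThmE41Pw1stkind} and \ref{MargTrppdSrfsClassThm2ndKIND} (being open, it is itself an oriented marginally trapped surface), and in the converse direction one must check that the pointwise 1-type equation extends smoothly from $\mathcal M$ across $\partial \mathcal M$ to $M$. In class (i) this is automatic because $f$ is proportional to the Gaussian curvature $K$ by the formula $\Delta\nu=-2K\nu$ from the proof of Theorem \ref{QGAClassificationThmE41Pw1stkind}, and $K$ is globally smooth; in class (ii) it is automatic because $f=F'(\psi)$ as in \eqref{PDEforPHIFuncf} is smooth on the whole parameter domain $\Omega$.
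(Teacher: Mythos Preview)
Your approach is correct and coincides with the paper's: the theorem is presented there simply as the combination of Theorem \ref{QGAClassificationThmE41Pw1stkind} and Theorem \ref{MargTrppdSrfsClassThm2ndKIND}, with no further argument, and your dichotomy on whether $C=0$ or $C\neq 0$ is exactly that combination. Your additional care---invoking the Corollary after Theorem \ref{QGAClassificationThmE41Pw1stkind} to exclude the non-proper first-kind case on $\mathcal M$, and discussing how the pointwise 1-type equation extends from $\mathcal M$ to all of $M$---fills in details the paper leaves implicit.
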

By combaining Proposition \ref{PseudoUmbSurf2ndkind} and Theorem \ref{QGAClassificationThm}, we obtain classification of marginally trapped surfaces with 1-type Gauss map.
\begin{theorem}
Let $M$ be a marginally trapped surface in the Minkowski space-time. Then,  $M$ has non-harmonic 1-type Gauss map, if and only if it is congruent to the surface given by \eqref{MinkMargTrapnottooComplete} for a smooth function $\phi:\Omega\rightarrow \mathbb R$ satisfying
Helmholtz equation 
\begin{equation}\label{EqPHIPW1TYPE2ndGlbl}
\Delta\phi+\lambda\phi=c_1u+c_2v
\end{equation}
for some constants $\lambda\neq0,\ c_1,\ c_2$.
\end{theorem}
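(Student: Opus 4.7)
The plan is to combine Theorem \ref{QGAClassificationThm} with Proposition \ref{PseudoUmbSurf2ndkind}, specializing the pointwise classification to the case where the coefficient function is constant.

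For the forward direction, I assume $\Delta\nu=\lambda(\nu+C)$ with $\lambda\neq 0$, which is pointwise 1-type with the constant $f\equiv\lambda$. Theorem \ref{QGAClassificationThm} leaves two possibilities for $\mathcal{M}$: (i) a non-flat surface in $\mathbb{S}^3_1(r^2)$ or $\mathbb{H}^3(-r^2)$, or (ii) a surface from Proposition \ref{PseudoUmbSurf2ndkind}. Case (i) is ruled out at once, because the proof of Theorem \ref{QGAClassificationThmE41Pw1stkind} gives $\Delta\nu=-2K\nu$ with $K$ non-constant, so the coefficient cannot be the constant $\lambda$. In Case (ii), Proposition \ref{PseudoUmbSurf2ndkind} supplies $\psi$ and constants $c_1',c_2'$ with $\Delta\psi=F(\psi)$, $\phi=\psi+c_1'u+c_2'v$ and $f=F'(\psi)$. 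The 1-type hypothesis forces $F'\equiv\lambda$, so $F(t)=\lambda t+d$ for some constant $d$, giving
\[
\Delta\phi=\Delta\psi=\lambda\psi+d=\lambda\phi-\lambda c_1'u-\lambda c_2'v+d.
\]
Translating the surface along the null vector $(1,0,0,1)$ by $d/\lambda$ replaces $\phi$ by $\phi+d/\lambda$ (a rigid motion of $\mathbb{E}^4_1$, hence a congruence) and eliminates the additive constant $d$. Relabeling $\lambda\mapsto -\lambda$ and $c_i:=-\lambda c_i'$ then puts the equation into the stated form $\Delta\phi+\lambda\phi=c_1u+c_2v$.

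For the converse, given $\phi$ with $\Delta\phi+\lambda\phi=c_1u+c_2v$ and $\lambda\neq 0$, define $\psi=\phi-\frac{c_1}{\lambda}u-\frac{c_2}{\lambda}v$. A direct computation using that $u$ and $v$ are harmonic in the surface's flat metric yields $\Delta\psi=\Delta\phi=-\lambda\psi$. Hence $\psi$ satisfies the hypothesis $\Delta\psi=F(\psi)$ of Proposition \ref{PseudoUmbSurf2ndkind} with the non-constant linear function $F(t)=-\lambda t$, and the decomposition $\phi=\psi+\frac{c_1}{\lambda}u+\frac{c_2}{\lambda}v$ is of the required form. That proposition then delivers $\Delta\nu=-\lambda(\nu+C)$ for the explicit constant vector $C$, a 1-type Gauss map which is non-harmonic because $\lambda\neq 0$.

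The main obstacle is simply the sign and shift bookkeeping: converting cleanly between $\Delta\psi=F(\psi)$, the $(1,0,0,1)$-translation used to absorb the additive constant coming from an affine $F$, and the target Helmholtz-type equation, while keeping the sign of $\lambda$ and the rescaling of $c_1,c_2$ consistent in both directions. All of the substantive geometric content (the dichotomy between a spherical/hyperbolic lift and the pseudo-umbilical Monge form, and the explicit constant $C$) is already contained in Theorem \ref{QGAClassificationThm} and Proposition \ref{PseudoUmbSurf2ndkind}.
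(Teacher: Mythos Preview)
Your proof is correct and follows exactly the approach the paper indicates: the paper states the theorem immediately after writing ``By combining Proposition \ref{PseudoUmbSurf2ndkind} and Theorem \ref{QGAClassificationThm}, we obtain classification of marginally trapped surfaces with 1-type Gauss map'' and gives no further argument, so your write-up is in fact more detailed than the paper's own treatment. The only cosmetic point is that your sign/shift bookkeeping (relabelling $\lambda\mapsto-\lambda$ and translating by $d/\lambda$ along $(1,0,0,1)$) could be streamlined, but it is correct as written.
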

\begin{Corol}
If  a marginally trapped surface $M$ has non-harmonic, 1-type Gauss map, then it is null 2-type.
\end{Corol}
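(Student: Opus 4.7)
The plan is to unpack the definitions and reduce the claim to two ingredients already in hand: the non-harmonicity hypothesis and the Corollary stated just after Theorem \ref{QGAClassificationThmE41Pw1stkind}. By assumption, the Gauss map $\nu$ of $M$ satisfies the global 1-type equation $\Delta\nu=\lambda(\nu+C)$ for some $\lambda\in\mathbb R$ and some constant vector $C\in\Lambda^{4,2}_S$, while $\Delta\nu\not\equiv 0$. The goal is to exhibit a decomposition $\nu=\nu_0+\nu_1$ with $\Delta\nu_0=0$, $\Delta\nu_1=\lambda_1\nu_1$, and $\nu_1$ non-constant and non-trivial, with $\lambda_1\neq 0$; that is exactly the null 2-type condition.

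The first step is to rule out $\lambda=0$: if $\lambda=0$ then $\Delta\nu\equiv 0$, contradicting the non-harmonic hypothesis. So $\lambda\neq 0$. The main step is to show $C\neq 0$. Suppose, toward a contradiction, that $C=0$. Then $\nu$ satisfies \eqref{Glbl1TypeDefinition} with $C=0$, and the Corollary following Theorem \ref{QGAClassificationThmE41Pw1stkind} forces $\nu$ to be harmonic, again contradicting the hypothesis. Hence $C\neq 0$.

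With $\lambda\neq 0$ and $C\neq 0$ in hand, set $\nu_0=-C$ (a constant vector in $\Lambda^{4,2}_S$) and $\nu_1=\nu+C$. Then $\Delta\nu_0=0$ trivially, and $\Delta\nu_1=\Delta\nu=\lambda(\nu+C)=\lambda\nu_1$, so $\nu_1$ lies in the $\lambda$-eigenspace of $\Delta$. To confirm that this is a genuine 2-type decomposition we must check $\nu_1\not\equiv 0$; but if $\nu_1\equiv 0$ then $\nu\equiv -C$ would be constant, forcing $\Delta\nu\equiv 0$ and contradicting non-harmonicity once more. Thus $\nu=\nu_0+\nu_1$ is a sum of eigenvectors of $\Delta$ corresponding to the two distinct eigenvalues $0$ and $\lambda\neq 0$, which is precisely the definition of null 2-type.

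I do not expect any serious obstacle: the entire argument is a bookkeeping exercise with the definitions of 1-type, 2-type, and null $k$-type, together with the earlier Corollary that upgrades ``$C=0$ in \eqref{Glbl1TypeDefinition}'' to ``harmonic''. The only point that deserves a single sentence of care is verifying that the summand $\nu_1=\nu+C$ is not identically zero, so that the decomposition really has two non-trivial pieces rather than collapsing to a single constant term.
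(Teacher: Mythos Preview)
Your argument decomposes the wrong map. In Chen's finite type theory, and as confirmed by the paper's own proof, the phrase ``$M$ is null 2-type'' refers to the \emph{position vector} $x$ of the immersion, not to the Gauss map. What you produce is a decomposition $\nu=-C+(\nu+C)$ of the Gauss map into a constant piece plus a $\lambda$-eigenfunction; but that is precisely the decomposition that \emph{defines} ``1-type Gauss map'' in the sense of \eqref{Glbl1TypeDefinition}. A constant summand is not counted as a separate type component, so your $\nu_0=-C$ does not upgrade the Gauss map from 1-type to null 2-type --- it simply restates the hypothesis.

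The paper's proof proceeds quite differently: it invokes the preceding classification theorem, which says that a marginally trapped surface with non-harmonic 1-type Gauss map is congruent to $x(u,v)=(\phi,u,v,\phi)$ with $\phi$ satisfying the Helmholtz equation $\Delta\phi+\lambda\phi=c_1u+c_2v$ for some $\lambda\neq 0$. One then writes the position vector explicitly as $x=x_0+x_1$ with
\[
x_0=\Big(-\tfrac{c_1u+c_2v}{\lambda},\,u,\,v,\,-\tfrac{c_1u+c_2v}{\lambda}\Big),\qquad
x_1=\Big(\phi+\tfrac{c_1u+c_2v}{\lambda},\,0,\,0,\,\phi+\tfrac{c_1u+c_2v}{\lambda}\Big),
\]
and checks directly that $\Delta x_0=0$ (with $x_0$ non-constant) and $\Delta x_1=\lambda x_1$. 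That is the null 2-type decomposition of $x$ the Corollary asserts. To repair your proof you would need to work with $x$, not $\nu$, and the bookkeeping you describe for $\nu$ does not transfer.
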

\begin{proof}
We assume that $M$ is the surface given by \eqref{MinkMargTrapnottooComplete} for a smooth function $\phi$ satisfying \eqref{EqPHIPW1TYPE2ndGlbl}. We define the mappings $x_0$ and $x_1$ by $x_0(u,v)=\Big(-\frac{c_1u+c_2v}\lambda ,u,v,-\frac{c_1u+c_2v}\lambda\Big)$ and $x_1(u,v)=\Big(\phi+\frac{c_1u+c_2v}\lambda,0,0,\phi+\frac{c_1u+c_2v}\lambda\Big).$
Then, we see that these mappings satisfy $x=x_0+x_1$, $\Delta x_0=0$ and $\Delta x_1=\lambda x_1$. Hence $M$ is null 2-type.
\end{proof}

\subsection{Examples}
In this subsection, we present explicit examples of (partly) marginally trapped surfaces with pointwise 1-type Gauss map. We also give construction of marginally trapped surfaces with 1-type Gauss map for a given boundary curve.

In the first example, we construct a marginally trapped surface with proper pointwise 1-type Gauss map of the second kind. 
\begin{Example}
Let $\epsilon>0$ and $\Omega_\epsilon=\{(u,v)|\epsilon<u+v<1\}$ and $\phi:\Omega\rightarrow \mathbb R$ be a function given by 
\begin{equation}\nonumber
\phi(u,v)= \exp\left(\frac{1}{u+v}\right) 
\end{equation}
Then, by a direct calculation, we see that $\phi:\Omega_\epsilon\rightarrow\mathbb R$ satisfies \eqref{PDEforPHI} for the differentiable function $F(\phi)= \big(-2(\ln\phi)^4-4(\ln\phi)^3\big)\phi$ and the function given by \eqref{PDEforPHIFuncf} becomes
\begin{equation}\label{PseudoUmbSurf2ndkindEx2f}
f(u,v)=-\frac{12}{(u+v)^2}-\frac{12}{(u+v)^3}-\frac{2}{(u+v)^4}
\end{equation}
which is smooth on $\Omega_\epsilon$. Hence, Proposition \ref{PseudoUmbSurf2ndkind} implies that the marginally trapped surface $ M$ given by \eqref{MinkMargTrapnottooComplete} has pointwise 1-type Gauss map of the second kind.
\end{Example}

In the next proposition, we  obtain construction of marginally trapped surfaces in the Minkowski space-time with 1-type Gauss map for a given boundary curve.
\begin{Prop}\label{PropJordanCurve}
Let $\beta$ be a continuous Jordan curve lying in a space-like plane of the Minkowski space-time.  Then, for each eigenvalue $\lambda_k$ of operator $\Delta$ given in \eqref{OpDeltaOmega} there exists a (partly) marginally trapped surface $M_k$ with the boundary $\beta$ and 1-type Gauss map satisfying \eqref{PW1TypeDefinition} for $f=\lambda_k$ and $C=(0,1,0,0)\wedge (0,0,1,0).$
\end{Prop}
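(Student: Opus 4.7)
The plan is to realize $M_k$ as the graph surface over the planar domain enclosed by $\beta$, built from an eigenfunction of the Dirichlet Laplacian, and then cash in the 1-type identity from Proposition \ref{PseudoUmbSurf2ndkind}.

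First I would place $\beta$ into a canonical position. Every space-like $2$-plane through the origin of $\mathbb{E}^4_1$ is Lorentz-congruent to $\{(0,u,v,0):(u,v)\in\mathbb{R}^2\}$, and since both the 1-type equation \eqref{PW1TypeDefinition} and the target vector $C$ are covariant under ambient isometries, I may assume $\beta\subset\{(0,u,v,0)\}$. The Jordan curve theorem then produces a bounded open region $\Omega\subset\mathbb{R}^2$ with $\partial\Omega=\beta$, placing me in the setting of the boundary value problem \eqref{BVPHelmoltz}.

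Next I would pick an eigenfunction $\phi_k\in H^1_0(\Omega)\cap C^\infty(\Omega)$ corresponding to $\lambda_k$, as guaranteed by the spectral theory of subsection \ref{subsectionHelmoltz}, and define
\[
x(u,v)=\bigl(\phi_k(u,v),\,u,\,v,\,\phi_k(u,v)\bigr),\qquad (u,v)\in\bar{\Omega}.
\]
By Remark \ref{RemarkONChenIshikawa1991Bih} this is a (partly) marginally trapped surface in $\mathbb{E}^4_1$ with $A_H=0$; the ``partly'' allows for $H$ to vanish on the nodal set of $\phi_k$ and, in particular, along $\beta$ itself. Because $\phi_k|_\beta=0$, the restriction $x|_\beta=(0,u,v,0)$ parametrizes $\beta$, giving $\partial M_k=\beta$ as required.

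For the 1-type property, I would apply Proposition \ref{PseudoUmbSurf2ndkind} with $\psi=\phi_k$, $F(t)=\lambda_k t$, and $c_1=c_2=0$. Then $\Delta\psi=F(\psi)$ is just the eigenfunction equation, the derived smooth function $f(u,v)=F'(\psi(u,v))\equiv\lambda_k$ reduces to the non-zero constant $\lambda_k$, and the formula \eqref{PDEforPHIConstC} collapses to $C=\pm(0,1,0,0)\wedge(0,0,1,0)$, the sign being pinned down by the orientation chosen for the frame $\{e_1,e_2\}$ in the statement. This yields $\Delta\nu=\lambda_k(\nu+C)$, which is \eqref{PW1TypeDefinition} (and hence \eqref{Glbl1TypeDefinition}) with $f=\lambda_k$.

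The main subtlety lies in the regularity of $M_k$ near $\beta$: if $\beta$ is only continuous, then $\phi_k$ need not extend smoothly to $\bar{\Omega}$, so $M_k$ is smooth only in the interior of $\Omega$ and approaches $\beta$ continuously. This is harmless for an existence statement, and under the additional assumption that $\beta$ is smooth, the elliptic boundary regularity noted in subsection \ref{subsectionHelmoltz} upgrades $\phi_k$ to $C^\infty(\bar{\Omega})$, so that $M_k$ is a smooth surface-with-boundary and the verification above remains valid up to and including $\beta$.
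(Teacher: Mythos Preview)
Your proof is correct and follows essentially the same route as the paper: normalize $\beta$ into the plane $\{(0,u,v,0)\}$, take the Dirichlet eigenfunction $\phi_k$ on the enclosed domain $\Omega$, build the graph surface \eqref{MinkMargTrapnottooComplete}, and invoke Proposition~\ref{PseudoUmbSurf2ndkind} with $F(t)=\lambda_k t$ and $c_1=c_2=0$. Your remarks on the sign of $C$ and on boundary regularity when $\beta$ is merely continuous are valid refinements that the paper leaves implicit.
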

\begin{proof}
Up to isometries of the Minkowski space-time, we assume that $\beta$ is contained by the space-like plane $\Pi=\{(0,u,v,0)|u,v\in\mathbb R\}$. Let $\Omega$ be the open subset of $\Pi$ bounded by $\beta$. As $\beta$ is a Jordan curve, $\Omega$ is connected. Consider a solution $\phi_k$ of the boundary value problem given in \eqref{BVPHelmoltz} for an eigenvalue $\lambda_k$ of \eqref{OpDeltaOmega}. Then, $\phi_k$ is smooth (see Section \ref{subsectionHelmoltz}) and satisfies \eqref{EqPHIPW1TYPE2ndGlbl} for $c_1=c_2=0$. 

Let $M_k$ be (partly) marginally trapped surface in the Minkowski space-time given by \eqref{MinkMargTrapnottooComplete} for $\phi=\phi_k$. Then, $\phi$ satisfies \eqref{PDEforPHI} for  the function $F=\lambda_k\phi$. Moreover, the function $f$ given by \eqref{PDEforPHI} becomes a constant function. Thus, it is smooth. Hence, $M_k$ has 1-type Gauss map because of Proposition \ref{PseudoUmbSurf2ndkind}. Moreover, as $\phi_k=0$ on $\partial\Omega=\beta$, $\beta$ is the boundary curve of $M_k$.
\end{proof}
In the next example, we construct a marginally trapped surface with the boundary of a square lying in the space-like plane $\Pi$.
\begin{Example}\label{ExampleGivenBoundary}
The function $\phi(u,v)=\sin (n\pi x )sin (n\pi y)$ solves the boundary value problem \eqref{BVPHelmoltz} for $\Omega=(0,1)\times(0,1)$, where $n\in\mathbb N$. Therefore, the surface given by $x=(\phi(u,v),u,v,\phi(u,v))$ has 1-type Gauss map. Moreover, the boundary curve of this surface is the non-smooth curve $\beta$ which is a square in the plane $\Pi$.
 \end{Example}

\section{Marginally trapped surfaces in the de Sitter and anti-de Sitter space-times}
In this section, we obtain marginally trapped surfaces in the de-Sitter space-time $\mathbb S^4_1(1)$ and anti-de Sitter space-time $\mathbb H^4_1(-1)$. 

First, we want to prove the following proposition.
\begin{Prop}\label{THMAH0DeSitter}
Let $M$ be a marginally trapped surface in the space-time $R^4_1(\delta)$ $\delta=\pm1$, and $H$ its mean curvature vector. If $A_H=0$, then $H$ is parallel.
\end{Prop}
\begin{proof}
Let $x=(x_0,x_1,x_2,x_3,x_4):\Omega\rightarrow E(4,1,\delta)$ be the position vector of $M$ and $u,v$ local coordinates on $M$ such that the induced metric of $M$  is $g=m^2(du^2+dv^2)$ for a non-vanishing smooth function $m$.  Then, the Laplace operator becomes $\Delta=m^{-2}(-\partial_u^2-\partial_v^2)$. 
Consider the orthonormal frame field $\{e_1,e_2;e_3,e_4\}$ with $e_1=m^{-1}\partial_u$, $e_2=m^{-1}\partial_v$ and
\begin{equation}\label{THMPW1DeSitterH}
H=\alpha(e_3-e_4).
\end{equation}
Since $A_H=0$, we have $A_3=A_4$.  From Ricci equation \eqref{MinkRicciEquation} we have $R^{D}=0$. Therefore, we may choose $e_3,\ e_4$ such that $\omega_{34}=0$. Then, we have $A_{e_3-e_4}=0$ and $\hat D(e_3-e_4)=0$, where $\hat D$ is normal connection of $M$ in $E(4,1,\delta)$. Thus, $e_3-e_4$ is a constant light-like vector. Up to linear isometries of $E(4,1,\delta)$, we may assume 
\begin{equation}\label{THMPW1DeSittere34}
e_3-e_4=(a,0,0,0,a)
\end{equation}
for a constant $a\neq0$. From the Laplace-Beltrami equation $\Delta x=-2H$,  \eqref{THMPW1DeSitterH} and \eqref{THMPW1DeSittere34} we obtain $\Delta x=-2\alpha(a,0,0,0,a)$
which implies
\begin{equation}\label{THMPW1DeSittereLaplx}
\Delta x_i=0,\quad i=1,2,3.
\end{equation}

On the other hand, as $\langle e_1,e_3-4\rangle=\langle e_2,e_3-4\rangle=0$, we have $x_4=x_0+c$ for a constant $c$ from \eqref{THMPW1DeSittere34}. As $g=m^2(du^2+dv^2)$, we have
\begin{subequations}\label{THMPW1DeSitterexuxv}
\begin{eqnarray}
\delta\left(\frac{\partial x_1}{\partial u}\right)^2+\left(\frac{\partial x_2}{\partial u}\right)^2+\left(\frac{\partial x_3}{\partial u}\right)^2&=&m^2,\\
\delta\left(\frac{\partial x_1}{\partial v}\right)^2+\left(\frac{\partial x_2}{\partial v}\right)^2+\left(\frac{\partial x_3}{\partial v}\right)^2&=&m^2,
\end{eqnarray}
\end{subequations}
 From $\langle x,x\rangle=\delta$ we obtain $c^2+2cx_0+\delta x_1^2+x_2^2+x_3^2=\delta$. By taking Laplacian of both sides of this equation, we obtain
\begin{align}\label{THMPW1DeSittereLaplAll}
\begin{split}
&2c\Delta x_0+2\delta x_1\Delta x_1+2x_2\Delta x_2+2x_3\Delta x_3-\frac{2}{m^2}\left(\delta\left(\frac{\partial x_1}{\partial u}\right)^2+\left(\frac{\partial x_2}{\partial u}\right)^2+\left(\frac{\partial x_3}{\partial u}\right)^2\right.\\&+\left.\delta\left(\frac{\partial x_1}{\partial v}\right)^2+\left(\frac{\partial x_2}{\partial v}\right)^2+\left(\frac{\partial x_3}{\partial v}\right)^2\right)=0.
\end{split}
\end{align}
By combaining \eqref{THMPW1DeSittereLaplx}-\eqref{THMPW1DeSittereLaplAll}, we obtain
$\Delta x_0=\Delta x_4=\mbox{const}.$ From this equation and \eqref{THMPW1DeSittereLaplx} we see that $\Delta x$ is a constant vector. Thus, the Laplace-Beltrami equation implies that $H$ is constant.  Hence, $H$ is parallel.
\end{proof}
Next, we obtain the following theorem.
\begin{theorem}\label{THMPW1DeSitter}
Let $M$ be a marginally trapped surface in the  space-time  $R^4_1(\delta)$, $\delta=\pm1$. Then, the following statements are logically equivalent.
\begin{enumerate}
\item[(i)]    $M$ has pointwise 1-type Gauss map;
\item[(ii)]   $M$ has pointwise 1-type Gauss map of the first kind;
\item[(iii)]  $M$ has parallel mean curvature vector.
\end{enumerate}
\end{theorem}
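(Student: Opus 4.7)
The plan is to prove the cyclic chain $(\mathrm{iii})\Rightarrow(\mathrm{ii})\Rightarrow(\mathrm{i})\Rightarrow(\mathrm{iii})$; the middle implication is tautological, so only the other two are substantial. Both rest on one preliminary observation: because $H$ is light-like and the normal bundle of $M$ in $R^4_1(\delta)$ carries Lorentzian signature $(1,1)$, in any $2$-dimensional Lorentzian inner product space the orthogonal complement of a null vector equals the line it spans. Together with $\langle D_{e_i}H,H\rangle=0$, this forces $D_{e_i}H=\mu_i H$ for some smooth functions $\mu_1,\mu_2$, and substituting into \eqref{Mink4GaussLaplMargTrppd} yields the workhorse identity
\[
\Delta\nu \;=\; (4\delta-2K)\,\nu \,-\, 2K^D\, e_3\wedge e_4 \,-\, 2\mu_2\, e_1\wedge H \,+\, 2\mu_1\, e_2\wedge H.
\]
Since $H=\alpha(e_3-e_4)$ with $\alpha$ nowhere zero, the four bivectors on the right are linearly independent in $\Lambda^2$.

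For $(\mathrm{iii})\Rightarrow(\mathrm{ii})$, assume $DH\equiv 0$ so that $\mu_1=\mu_2=0$; I need only verify $K^D=0$. By Lemma \ref{LemmaParallelMeanCurvature}, one can pick a tangent frame $\{e_1,e_2\}$ and a null normal frame $\{H,f_4\}$ with $\langle H,f_4\rangle=-1$ in which $h(e_1,e_2)=0$ and $h(e_i,e_i)$ take the canonical form of that lemma. A short computation with $\langle A_\xi X,Y\rangle=\langle h(X,Y),\xi\rangle$ then shows that both $A_H$ and $A_{f_4}$ are diagonal in $\{e_1,e_2\}$, so $[A_H,A_{f_4}]=0$. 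By linearity of $\xi\mapsto A_\xi$ this gives $[A_\xi,A_\eta]=0$ for all normal $\xi,\eta$, and the Ricci equation \eqref{MinkRicciEquation} delivers $R^D\equiv 0$; the workhorse identity then reduces to $\Delta\nu=(4\delta-2K)\,\nu$, i.e., pointwise $1$-type of the first kind.

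For $(\mathrm{i})\Rightarrow(\mathrm{iii})$, write $\Delta\nu=f(\nu+C)$ with $f\not\equiv 0$ and restrict first to the open set $\mathcal U=\{f\neq 0\}$. Subtracting $f\nu$ from the workhorse identity and dividing by $f$ shows that on $\mathcal U$
\[
C \;=\; C_{12}\,e_1\wedge e_2 + C_{34}\,e_3\wedge e_4 + C_1\,e_1\wedge H + C_2\,e_2\wedge H,
\]
exactly the form \eqref{CVectPropMargTrap2ndKind} required by Lemma \ref{PropMargTrap2ndKind}, which therefore gives $A_H=0$ or $C=0$. In the first case, Proposition \ref{THMAH0DeSitter} (just proved) upgrades $A_H=0$ to $DH=0$. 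In the second case $\Delta\nu=f\nu$ has no $e_1\wedge H$ or $e_2\wedge H$ components, and the independence above forces $\mu_1=\mu_2=0$, giving $DH=0$ again. On the interior of $M\setminus\mathcal U$, $\Delta\nu\equiv 0$ and the same coefficient argument produces $\mu_i=0$; since $\overline{\mathcal U}\cup\mathrm{int}(M\setminus\mathcal U)=M$, smoothness of $DH$ yields $DH\equiv 0$ throughout $M$.

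The main obstacle is the second-kind branch of $(\mathrm{i})\Rightarrow(\mathrm{iii})$: reducing the bivector $C$ to the four pieces allowed by Lemma \ref{PropMargTrap2ndKind} rests squarely on the light-like identity $D_{e_i}H=\mu_i H$, without which the remaining terms of $\Delta\nu$ would not align with \eqref{CVectPropMargTrap2ndKind}. The separate treatment of the zero set of $f$ is what lifts the local analysis on $\mathcal U$ to the desired global conclusion that $H$ is parallel everywhere on $M$.
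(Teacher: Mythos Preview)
Your proof is correct and follows the same route as the paper: both arguments hinge on Lemma \ref{PropMargTrap2ndKind} together with Proposition \ref{THMAH0DeSitter} for $(\mathrm{i})\Rightarrow(\mathrm{iii})$, and on Lemma \ref{Mink4GaussLaplMargTrapped} with $K^D=0$ for $(\mathrm{iii})\Rightarrow(\mathrm{ii})$. You add some extra care---making $D_{e_i}H=\mu_iH$ explicit, deriving $K^D=0$ via Lemma \ref{LemmaParallelMeanCurvature} rather than simply asserting it, and treating the zero set of $f$ separately---but the approach is essentially the same.
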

\begin{proof}
First, we  prove (i)$\Rightarrow$(iii). Let $M$ have  pointwise 1-type Gauss map. Then, its Gauss map satisfies \eqref{PW1TypeDefinition} for a smooth function $f$ and a constant vector field $C$. If $C=0$, then \eqref{Mink4GaussLaplMargTrppd} implies $D_{e_i}H=0,\ i=1,2$. Thus, the proof is completed. 

We assume $C\neq0$. Then, from \eqref{PW1TypeDefinition} and \eqref{Mink4GaussLaplMargTrppd} for $\delta=1$ we see that $C$ is of the form of \eqref{CVectPropMargTrap2ndKind}. Since $C$ is constant, Lemma \ref{PropMargTrap2ndKind} implies $A_H=0$ and from Proposition \ref{THMAH0DeSitter} we see $DH=0$. Hence, $H$ is parallel.

(iii)$\Rightarrow$(ii) directly follows from Lemma \ref{Mink4GaussLaplMargTrapped} as $DH=0$ implies $K^D=0$. (ii)$\Rightarrow$(i) is obvious.
\end{proof}
\begin{Remark}
See \cite{ChenVeken2009Houston}, for the classification of marginally trapped surfaces in the de Sitter and anti de-Sitter space-time with parallel mean curvature vector.
\end{Remark}
\begin{Prop}\label{Glbl1TypePROP}
Let $M$ be a marginally trapped surface in the  space-time $R^4_1(\delta),\ \delta=\pm1$. If $M$ has 1-type Gauss map, then $\Delta\nu=4\delta\nu$ or $\Delta\nu=2\delta\nu$.
\end{Prop}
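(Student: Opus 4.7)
The plan is to reduce the global 1-type equation \eqref{Glbl1TypeDefinition} to the form $\Delta\nu=\lambda\nu$, and then narrow the admissible values of $\lambda$ down to $2\delta$ and $4\delta$.

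First, I would invoke Theorem \ref{THMPW1DeSitter}. Since the 1-type condition is a special case of the pointwise 1-type condition (take $f=\lambda$ constant in \eqref{PW1TypeDefinition}), the equivalences there force $M$ to have pointwise 1-type Gauss map \emph{of the first kind} and parallel mean curvature vector $H$. In particular, the constant vector $C$ in \eqref{Glbl1TypeDefinition} must be zero, so $\Delta\nu=\lambda\nu$.

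Second, I would use Lemma \ref{Mink4GaussLaplMargTrapped}. Parallel $H$ gives $D_{e_i}H=0$; and if $f_4$ denotes the null companion normal field from Lemma \ref{LemmaParallelMeanCurvature}, differentiating $\langle H,f_4\rangle=-1$ and $\langle f_4,f_4\rangle=0$ together with $DH=0$ forces $D_{e_i}f_4=0$ as well, so $R^D\equiv 0$ and $K^D=0$. Hence \eqref{Mink4GaussLaplMargTrppd} reduces to $\Delta\nu=(4\delta-2K)\nu$, and matching with $\Delta\nu=\lambda\nu$ gives $\lambda=4\delta-2K$, so the Gaussian curvature $K$ is a constant.

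Third, I would apply Lemma \ref{LemmaParallelMeanCurvature} to choose the frame $\{e_1,e_2;H,f_4\}$ with $h(e_1,e_1)=(1-a_1)H+a_2f_4$, $h(e_1,e_2)=0$, $h(e_2,e_2)=(1+a_1)H-a_2f_4$, $\langle H,f_4\rangle=-1$. A direct expansion of the Gauss equation \eqref{MinkGaussEquation} yields $K=\delta-2a_1a_2$, so the product $c:=a_1a_2$ is constant.

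Finally, and this is the main computational step, I would analyze the Codazzi equation \eqref{MinkCodazzi}. Writing $P=\omega_{12}(e_1)$ and $Q=\omega_{12}(e_2)$ and using that both $H$ and $f_4$ are parallel, the identities $(\bar\nabla_{e_1}h)(e_2,e_2)=(\bar\nabla_{e_2}h)(e_1,e_2)$ and $(\bar\nabla_{e_1}h)(e_1,e_2)=(\bar\nabla_{e_2}h)(e_1,e_1)$, read off against the independent normal vectors $H$ and $f_4$, give $e_1(a_i)=-2a_iQ$ and $e_2(a_i)=2a_iP$ for $i=1,2$. Differentiating $a_1a_2=c$ along $e_1$ and $e_2$ then yields $cQ=0$ and $cP=0$. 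Either $c=0$, in which case $K=\delta$ and $\lambda=2\delta$; or $P=Q=0$, so the tangent frame is parallel, $M$ is intrinsically flat ($K=0$), and $\lambda=4\delta$. I expect the bookkeeping in this last step (verifying that only these two scalar relations survive the Codazzi identities in the null normal basis) to be the main obstacle; the structural inputs from Theorem \ref{THMPW1DeSitter} and Lemma \ref{LemmaParallelMeanCurvature} do the heavy lifting elsewhere.
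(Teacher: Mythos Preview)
Your argument is correct and follows essentially the same route as the paper: invoke Theorem \ref{THMPW1DeSitter} to get $C=0$ and $DH=0$, use Lemma \ref{Mink4GaussLaplMargTrapped} to reduce to $\Delta\nu=(4\delta-2K)\nu$, then read off $K$ via the Gauss equation in the frame of Lemma \ref{LemmaParallelMeanCurvature} and finish with the Codazzi identities $e_1(a_i)=-2a_i\,\omega_{12}(e_2)$, $e_2(a_i)=2a_i\,\omega_{12}(e_1)$. Your dichotomy $cP=cQ=0\Rightarrow(c=0$ or $\omega_{12}=0)$ is in fact slightly cleaner than the paper's, which passes through the intermediate (and not entirely justified when $c=0$) claim that $a_1,a_2$ are individually constant before reaching the same split. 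Note also that your Gauss computation $K=\delta-2a_1a_2$ is the correct sign; the paper writes $K=2a_1a_2+\delta$, but this discrepancy is immaterial since in either case $a_1a_2=0$ gives $K=\delta$ (hence $\lambda=2\delta$) and $\omega_{12}=0$ gives $K=0$ (hence $\lambda=4\delta$).
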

\begin{proof}
If $M$ has pointwise 1-type Gauss map, then  Theorem \ref{THMPW1DeSitter} implies $M$ has parallel mean curvature veector.  Let $e_1,e_2,f_4$ be vector fields given in Lemma \ref{LemmaParallelMeanCurvature} for some smooth functions $a_1,\ a_2$. Then, from the Gauss equation \eqref{MinkGaussEquation} we have $K=2a_1a_2+\delta$. Thus,  from \eqref{Mink4GaussLaplMargTrppd} for $\delta=\pm1$ we obtain 
\begin{equation}\label{ParallelMeanCurvaturenu}
\Delta\nu=(2\delta-4a_1a_2)\nu.
\end{equation}

On the other hand, from the Codazzi equation \eqref{MinkCodazzi} we get 
\begin{equation}\label{ParallelMeanCurvatureCodazzi}
e_1(a_i)=-2a_i\omega_{12}(e_2),\quad e_2(a_i)=2a_i\omega_{12}(e_1), \quad i=1,2.
\end{equation}
Now, assume that $M$  has  1-type Gauss map. Then, we have $a_1a_2=\mbox{const}$ from which and \eqref{ParallelMeanCurvatureCodazzi} we see $a_1$ and $a_2$ are constants. Moreover, from \eqref{ParallelMeanCurvatureCodazzi} we see that either $\omega_{12}=0$ or $a_1=a_2=0$. Because of \eqref{ParallelMeanCurvaturenu}, these two cases imply either $\Delta\nu=4\delta\nu$ or $\Delta\nu=2\delta\nu$, respectively.
\end{proof}
We state the following corollary of Proposition \ref{Glbl1TypePROP}.
\begin{Corol}
There is no marginally trapped surface in the de Sitter or anti-de Sitter space-times with harmonic Gauss map.
\end{Corol}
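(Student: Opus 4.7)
The plan is to argue by contradiction, following the proof of Proposition \ref{Glbl1TypePROP} but replacing its opening step. Suppose $M\subset R^4_1(\delta)$, $\delta=\pm 1$, is marginally trapped with $\Delta\nu\equiv 0$. The strategy is to first extract from the harmonicity condition the parallelism of $H$, so that the Codazzi/structure-equation computation already carried out inside the proof of Proposition \ref{Glbl1TypePROP} will produce the desired contradiction.

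First, I would fix an adapted orthonormal frame $\{e_1,e_2;e_3,e_4\}$ on $M$. The bivectors $\{e_A\wedge e_B\}_{1\leq A<B\leq 4}$ form a basis of $\Lambda^{4,2}_S$. Since $D_{e_i}H$ lies in the normal bundle, it is a linear combination of $e_3$ and $e_4$, so the four terms on the right-hand side of \eqref{Mink4GaussMargTrppd} contribute to pairwise disjoint basis components: $(4\delta-2K)\nu$ sits in the $e_1\wedge e_2$ slot, $-2K^D e_3\wedge e_4$ in the $e_3\wedge e_4$ slot, $-2D_{e_1}H\wedge e_2$ in the $e_\alpha\wedge e_2$ slots $(\alpha=3,4)$, and $-2e_1\wedge D_{e_2}H$ in the $e_1\wedge e_\alpha$ slots. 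Imposing $\Delta\nu=0$ and using the linear independence of the basis forces
\begin{equation*}
K=2\delta,\qquad K^D=0,\qquad D_{e_1}H=D_{e_2}H=0;
\end{equation*}
in particular, $H$ is parallel.

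From here I would quote the computation inside the proof of Proposition \ref{Glbl1TypePROP}. Lemma \ref{LemmaParallelMeanCurvature} supplies smooth functions $a_1,a_2$ for which the Gauss equation yields $K=2a_1a_2+\delta$, so combined with $K=2\delta$ one obtains $a_1a_2=\delta/2$, a nonzero constant. Differentiating this constant with the Codazzi identities \eqref{ParallelMeanCurvatureCodazzi} gives $-4a_1a_2\,\omega_{12}(e_2)=0$ and $4a_1a_2\,\omega_{12}(e_1)=0$, whence $\omega_{12}\equiv 0$. The Cartan structure equations then force the intrinsic Gaussian curvature to vanish, contradicting $K=2\delta\neq 0$.

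The only genuinely new ingredient is the basis-decomposition step, which pins down $DH=0$ directly from harmonicity and thereby bypasses the pointwise 1-type hypothesis on which the proof of Proposition \ref{Glbl1TypePROP} relied through Theorem \ref{THMPW1DeSitter}. Once $H$ is parallel, the remainder of the argument is a verbatim replay of what is already present in that proof, and the mechanism of the contradiction is the same: a constant, nonzero $a_1a_2$ is incompatible with the flatness that $\omega_{12}\equiv 0$ imposes in $R^4_1(\pm 1)$.
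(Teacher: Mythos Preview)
Your proof is correct and follows essentially the same route as the paper. The paper derives the corollary directly from Proposition \ref{Glbl1TypePROP}: once one knows that any admissible constant value of $(2\delta-4a_1a_2)$ lies in $\{4\delta,2\delta\}$, harmonicity is excluded because $\delta=\pm1$. Your argument unfolds the same computation, with the added care of extracting $DH=0$ from the bivector-basis decomposition of $\Delta\nu=0$ rather than invoking Theorem \ref{THMPW1DeSitter} (whose hypothesis $f\not\equiv 0$ is not literally met in the harmonic case); the contradiction you reach, $\omega_{12}\equiv 0 \Rightarrow K=0$ versus $K=2\delta$, is exactly the mechanism behind the $\Delta\nu=4\delta\nu$ branch of Proposition \ref{Glbl1TypePROP}.
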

Next, we want to give an explicit example of marginally trapped surface in the de Sitter space-time with 1-type Gauss map.
\begin{Example}
Let $M$ be the marginally trapped surface in the de Sitter space-time given by $x(u,v)=(1,\sin u,\cos u\cos v,\cos u\sin v,1)$. Then,  mean curvature vector of $M$ in $\mathbb S^4_1(1)$ is parallel and its Gaussian curvature is 1, \cite{ChenVeken2009Houston}. Thus, \eqref{Mink4GaussLaplMargTrppd} for $\delta=1$ implies $\Delta\nu=2\nu$. Thus, $M$ has 1-type Gauss map. 
\end{Example}


\end{document}